\newcommand{\Cov}{\operatorname{Cov}}
\newtheorem{theorem}{Theorem}
\newtheorem{lemma}{Lemma}
\newtheorem{remark}{Remark}
\newtheorem{definition}{Definition}
\newtheorem{corollary}{Corollary}
\newtheorem{proposition}{Proposition}
\newtheorem{example}{Example}
\newcommand{\EE}{\mathbb{E}}
\title{Entropy Jumps for Radially Symmetric Random Vectors}
\author{Thomas~A.~Courtade \\ Department of Electrical Engineering and Computer Sciences\\University of California, Berkeley}
\begin{document}

\maketitle
 \begin{abstract}
We establish a quantitative bound on the entropy jump associated to the sum of independent, identically distributed (IID) radially symmetric random vectors having dimension greater than one.  Following the usual approach, we first consider the analogous problem of Fisher information dissipation, and then integrate along the Ornstein-Uhlenbeck semigroup to obtain an entropic inequality.  In a departure from previous work, we appeal to a result by Desvillettes and  Villani on entropy production associated to the Landau equation.  This obviates strong regularity assumptions, such as presence of a spectral gap and log-concavity of densities, but comes at the expense of radial symmetry.    As an application, we give a quantitative estimate of the deficit in the Gaussian logarithmic Sobolev inequality for radially symmetric functions. 
    \end{abstract}

\section{Introduction}\label{sec:Intro}
Let $X$ be a random vector on $\mathbb{R}^d$ with density $f$.  The entropy associated to $X$ is defined by 
\begin{align}
h(X) = -\int_{\mathbb{R}^d} f \log f,
\end{align}
provided the integral exists. The non-Gaussianness of $X$, denoted by $D(X)$,  is given by
\begin{align}
D(X) = h(G^X) - h(X),
\end{align}
where $G^X$ denotes a Gaussian random vector with the same covariance as $X$.  Evidently, $D(X)$ is the relative entropy of $X$ with respect to $G^X$, and is therefore nonnegative. Moreover, $D(X)=0$ if and only if $X$ is Gaussian.

Our main result may be informally stated as follows:  Let $X, X_*$  be IID radially symmetric random vectors on $\mathbb{R}^d$,  $d\geq 2$, with sufficiently regular density $f$.  For any $\varepsilon > 0$
\begin{align}
h(\tfrac{1}{\sqrt{2}}(X+X_*) ) - h(X)    
   &\geq   C_{\varepsilon}(X) D(X) ^{1+\varepsilon},\label{mainBound}
\end{align}
where  $C_{\varepsilon}(X)$ is an explicit function depending  on $\varepsilon$,   the regularity of $f$, and a finite number of moments of $X$. In particular, if $\EE|X|^{4+\delta}<\infty$ for some $\delta>0$, then $C_{2/\delta}(X)>0$.  A precise statement can be found in Section \ref{sec:MainResults}, along with an analogous result for Fisher information and a related estimate that imposes no regularity conditions.  In interpreting the result, it is important to note that, although a radially symmetric density $f:\mathbb{R}^d \to [0,\infty)$ has a one-dimensional parameterization, the convolution $f \ast  f$ is inherently a $d$-dimensional operation unless $f$ is Gaussian.   Thus, it does not appear that  \eqref{mainBound} can be easily reduced to a one-dimensional problem. 

The quantity $h(\tfrac{1}{\sqrt{2}}(X+X_*) ) - h(X)$ characterizes the \emph{entropy production} (or, entropy jump) associated to $X$ under rescaled convolution.  Similarly, letting $J(X) = 4 \int_{\mathbb{R}^d}  \left|\nabla \sqrt f  \right|^2$ denote the Fisher information of $X$,  the quantity $J(X) - J(\tfrac{1}{\sqrt{2}}(X+X_*) )$ characterizes the \emph{dissipation of Fisher information}.  By the convolution inequalities of Shannon \cite{shannon48}, Blachman \cite{blachman1965convolution} and Stam \cite{stam1959some} for entropy power and Fisher information, it follows that both the production of entropy  and the dissipation of Fisher information  under rescaled convolution are nonnegative.  Moreover, both quantities are identically zero if and only if $X$ is Gaussian. 

The fundamental problem of bounding entropy production (and dissipation of Fisher information) has received considerable attention, yet quantitative bounds are few.  In particular, the entropy power inequality establishes that entropy production is strictly greater than zero unless $X$ is Gaussian, but gives no indication of how entropy production behaves as, say, a function of $D(X)$ when $X$ is non-Gaussian. As a consequence, basic stability properties of the entropy power inequality remain elusive, despite it being a central inequality in information theory.  For instance, a satisfactory answer to the following question is still out of reach: \emph{If the entropy power inequality is nearly saturated, are the random summands involved quantifiably close to  Gaussian?} 

 Perhaps the first major  result to address the question of entropy production in this setting is due to Carlen and Soffer \cite{carlen1991entropy}, who showed for each random vector $X$ with $J(X)<\infty$ and $\Cov(X)=\mathrm{I}$, there exists a nonnegative function $\Theta_{\psi,\chi}$ on $[0,\infty)$, strictly increasing from 0, and depending only on the two auxiliary functions 
 \begin{align}
 \psi(R) =   \EE[|X|^2 \mathbf{1}_{\{|x|\geq R\}}],
 \end{align}
 and 
 \begin{align}
 \chi(t) =   h\left(X_t \right)- h(X), ~~~~X_t :=e^{-t}X + (1-e^{-2t})^{1/2}G^X,
 \end{align}
such that 
\begin{align}
h(\tfrac{1}{\sqrt{2}}(X+X_*) ) - h(X) \geq \Theta_{\psi,\chi}(  D(X)  ).
\end{align}
Moreover, the function $ \Theta_{\psi_0,\chi_0}$ will do for any density $f$ with $\psi_f \leq \psi_0$ and $\chi_f \leq \chi_0$.  Hence, this provides a nonlinear estimate of entropy production in terms of  $D(X)$ that holds uniformly over all probability densities that exhibit the same decay and smoothness properties (appropriately defined).  Unfortunately, the proof establishing existence of $\Theta_{\psi,\chi}$ relies on a compactness argument, and therefore falls short of giving satisfactory quantitative bounds.  

A random vector $X$ with density $f$ has  spectral gap $c>0$ (equivalently, finite Poincar\'e constant $1/c$) if, for all smooth functions $g$ with $\int_{\mathbb{R}^d} fg = 0$,
\begin{align}
c \int_{\mathbb{R}^d} f g^2 \leq \int_{\mathbb{R}^d} f |\nabla g|^2.
\end{align}
 Generally speaking, a non-zero spectral gap is a very strong regularity condition on the density $f$ (e.g., it implies $X$ has finite moments of all orders).  
  
In dimension $d=1$, if $X$ has spectral gap $c>0$, then  \cite{ball2003entropy, johnson2004fisher} established the linear bound\footnote{Technically speaking, Barron and Johnson establish a slightly different inequality.  However, a  modification of their argument  gives the same result as Ball, Barthe and Naor.  See the discussion surrounding \cite[Theorem 2.4]{johnson2004information}.}
\begin{align}
h(\tfrac{1}{\sqrt{2}}(X+X_*) ) - h(X) \geq \tfrac{c}{2+2c} D(X).\label{specGap}
\end{align}
In dimension $d\geq 2$, Ball and Nguyen \cite{ball2012entropy} recently established an essentially identical result under the additional assumption that $X$ is isotropic (i.e., $\Cov(X)=\mathrm{I}$) with log-concave density $f$. Along these lines, Toscani has a strengthened EPI for log-concave densities \cite{toscani2015strengthened}, but the deficit is qualitative in nature in contrast to  the quantitative estimate obtained by Ball and Nguyen. %

Clearly,  entropy production and Fisher information dissipation is closely related to convergence rates in the entropic and Fisher information central limit theorems.  Generally speaking though, bounds in the spirit of \eqref{specGap} are unnecessarily strong for establishing entropic central  limit theorems of the form $D(S_n) = O(1/n)$, where $S_n := \tfrac{1}{\sqrt{n}}\sum_{i=1}^n X_i$ denotes the normalized sum of $n$ IID copies of $X$.  Indeed, it was long conjectured that $D(S_n) = O(1/n)$ under moment conditions. This was positively resolved by Bobkov, Chistyakov and G\"otze~\cite{bobkov2013rate, bobkov2014berry} using Edgeworth expansions and local limit theorems.  By Pinsker's inequality, we know that $D(X)$ dominates squared total-variation distance, so    $D(S_n) = O(1/n)$ is interpreted as a version of the Berry-Esseen theorem for the entropic CLT with the optimal convergence rate.  However, while the results of Bobkov et al. give good long-range estimates of the form $D(S_n) = O(1/n)$ (with explicit constants depending on the cumulants of $X$), the smaller-order terms propagate from local limit theorems for Edgeworth expansions and are non-explicit.  Thus,  explicit bounds  for the initial entropy jump $h(\tfrac{1}{\sqrt{2}}(X+X_*) ) - h(X)$ cannot be readily obtained.

Along these lines, we remark that   Ledoux, Nourdin and   Peccati  \cite{ledoux2016stein} recently established the weaker convergence rate $D(S_n) = O(\frac{\log n}{n})$ via the explicit bound 
\begin{align}
D(S_n) \leq \frac{S^2(X|G) }{2 n} \log\left( 1 + n \frac{I(X|G)}{S^2(X|G)}\right) , \label{steinDeficit}
\end{align}  
  where $I(X|G)$ is the relative Fisher information of $X$ with respect to the standard normal $G$, and $S(X|G)$ denotes the \emph{Stein discrepancy} of $X$ with respect to $G$, which is defined when a Stein kernel exists  (see \cite{ledoux2016stein} for definitions). In principle, this has potential to give an explicit bound on the entropy production  $h(\tfrac{1}{\sqrt{2}}(X+X_*) ) - h(X)$  in terms of $S^2(X|G)$  by considering $n=2$.
   Unfortunately, a Stein kernel may not always exist;  even if it does, further relationships between $S^2(X|G)$ and  $I(X|G)$ or $D(X)$ would need to be developed to ultimately yield a bound like \eqref{mainBound}.
  
Another related line of work  in statistical physics considers quantitative bounds on entropy production in the Boltzmann equation (see the review \cite{villani2002review} for an overview).  The two problems are not the same, but there is a strong analogy between entropy production in the Boltzmann equation and entropy jumps associated to rescaled convolution as can be seen by comparing  \cite{carlen1992strict} to \cite{carlen1991entropy}.  The details of this rich subject are  tangential to the  present discussion, but we  remark that a major milestone in this area was achieved when the entropy production in the Boltzmann equation was bounded from below by an explicit function of $D(X)$ (and various norms of $X$), where $X$ models the velocity of a particle in a rarified gas\cite{toscani1999sharp,villani2003cercignani}.  A key ingredient used to prove this bound was an earlier result by Desvillettes and Villani that controls relative Fisher information $I(X|G)$ via  entropy production in the  Landau equation:

\begin{lemma}\label{lem:DV} \cite{desvillettes2000spatially}
Let $X$ be a random vector on $\mathbb{R}^d$,  satisfying $\EE|X|^2 = d\geq 2$ and having density $f$.  Then, 
\begin{align}
\frac{1}{2} \iint  \left| x-x_*\right|^2f(x) f(x_*) \left|  \Pi(x-x_*)\left[  \frac{\nabla f}{f}(x) -\frac{\nabla f}{f}(x_*) \right] \right|^2 \mathrm{d}x \, \mathrm{d}x_*\geq \lambda \, (d-1)  I(X|G), \label{landauBound}
\end{align}
where $\lambda$ is the minimum eigenvalue of the covariance matrix associated to $X$, and $\Pi(v)$ is the orthogonal projection onto the subspace orthogonal to $v\in \mathbb{R}^d$. 
\end{lemma}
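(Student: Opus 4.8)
The plan is to expand the integrand on the left of \eqref{landauBound}, integrate out one of the two variables using Gaussian-type integration by parts, and reduce everything to a short list of second-moment identities. To set up, let $\phi$ denote the standard Gaussian density on $\mathbb{R}^d$, put $g(x) := \nabla \log f(x) + x = \nabla \log(f/\phi)(x)$ so that $I(X|G) = \int_{\mathbb{R}^d} f |g|^2$, write $v = x - x_*$, and let $K = \Cov(X) = \EE[XX^\top]$, so that $\Tr K = d$ and $\lambda = \mu_{\min}(K)$. The lemma is insensitive to translations — the left side of \eqref{landauBound} involves only $\nabla \log f$ and the difference $x - x_*$, the covariance is translation invariant, and $I(X|G)$ should be read relative to the Gaussian with matching mean — so I may assume $\EE X = 0$; I also assume $f$ is smooth with enough decay to license the integrations by parts below (if $I(X|G)=\infty$ there is nothing to prove). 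Since $\Pi(v)v = 0$, inside the projection I may replace $\frac{\nabla f}{f}(x) - \frac{\nabla f}{f}(x_*)$ by $g(x) - g(x_*)$, so the left side of \eqref{landauBound} becomes
\begin{equation*}
\mathcal{D} := \tfrac12 \iint f(x) f(x_*)\,|v|^2\,\bigl|\Pi(v)\bigl(g(x) - g(x_*)\bigr)\bigr|^2 \,\mathrm{d}x\,\mathrm{d}x_*.
\end{equation*}

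First I would record three identities: $\int f g = \EE X = 0$; by integration by parts $\int f\, x_i g_j = K_{ij} - \delta_{ij}$ (hence $\int f \langle x, g\rangle = \Tr K - d = 0$); and $\int f(x_*)\bigl(|v|^2 I - vv^\top\bigr)\,\mathrm{d}x_* = (|x|^2 + d) I - xx^\top - K$. Next, using $|v|^2 \Pi(v) = |v|^2 I - vv^\top$ together with the $x \leftrightarrow x_*$ symmetry, I would split $\mathcal{D} = T_1 - T_2$ with $T_1 = \iint f f_* |v|^2 |\Pi(v) g(x)|^2$ and $T_2 = \iint f f_* |v|^2 \langle \Pi(v) g(x), \Pi(v) g(x_*)\rangle$ (here $f_* = f(x_*)$). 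The third identity turns $T_1$ into $\int f\bigl[(|x|^2 + d)|g|^2 - \langle x, g\rangle^2 - \langle g, K g\rangle\bigr]$.

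The crux is the cross term $T_2$. Expanding $|v|^2$ and $vv^\top$ termwise, every piece that carries a factor $\int f g = 0$ or $\int f \langle x, g\rangle = 0$ disappears, and the surviving genuinely coupled terms — computed via $\int f x_i g_j = K_{ij} - \delta_{ij}$ — add up to $T_2 = -2\|K - I\|_{\mathrm F}^2 + \Tr\bigl((K - I)^2\bigr) = -\|K - I\|_{\mathrm F}^2$, the last equality since $K - I$ is symmetric. Thus $\mathcal{D} = T_1 + \|K - I\|_{\mathrm F}^2 \geq T_1$. To conclude I would use the pointwise bound $|x|^2 |g|^2 - \langle x, g\rangle^2 \geq 0$ (Cauchy--Schwarz) and $\langle g, K g\rangle \leq \mu_{\max}(K)|g|^2$ to get $\mathcal{D} \geq T_1 \geq (d - \mu_{\max}(K))\, I(X|G)$, and then note that $\Tr K = d$ with the remaining $d-1$ eigenvalues each at least $\lambda$ forces $\mu_{\max}(K) \leq d - (d-1)\lambda$, i.e.\ $d - \mu_{\max}(K) \geq (d-1)\lambda$; this is the claimed inequality.

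The main obstacle is the bookkeeping in $T_2$: keeping track of the many products that arise from expanding $|v|^2 I - vv^\top$ against $\Pi(v)g(x)$ and $\Pi(v)g(x_*)$, and recognizing that the apparent coupling between the nonlinear weight $|v|^2\Pi(v)$ and the anisotropy of $X$ collapses, via the single identity $\int f x_i g_j = K_{ij} - \delta_{ij}$, to a Frobenius norm of the favorable sign (so favorable, in fact, that it is not even needed to close the estimate). Everything else is a direct computation or the elementary eigenvalue inequality above.
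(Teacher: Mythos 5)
The paper does not prove Lemma~\ref{lem:DV}; it cites the result from Desvillettes and Villani \cite{desvillettes2000spatially}. Your argument is a correct reconstruction of that proof: use $\Pi(v)v=0$ to pass to the relative score $g=\nabla\log(f/\phi)$, expand the quadratic form $(g-g_*)^\top(|v|^2 I-vv^\top)(g-g_*)$, integrate out one variable, and close with Cauchy--Schwarz together with the eigenvalue bound $\mu_{\max}(K)\le d-(d-1)\lambda$ coming from $\Tr K=d$. I verified the $T_2$ bookkeeping: after factoring each product through $\int fg=0$, $\int f\langle x,g\rangle=0$, and $\int f\,x_i g_j=K_{ij}-\delta_{ij}$, the only nonvanishing pieces are $-2\|K-I\|_{\mathrm F}^2$ (from $-2\langle x,x_*\rangle I$) and $+\|K-I\|_{\mathrm F}^2$ (from $+x_*x^\top$), so indeed $T_2=-\|K-I\|_{\mathrm F}^2\le 0$ and $\mathcal D\ge T_1$.

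Two minor points to tighten. The translation-invariance remark is slightly loose: translating $X$ changes $\EE|X|^2$ and, with $G$ fixed as the standard Gaussian, also $I(X|G)$, so one cannot simply recenter while keeping the stated normalization. The intended hypothesis is $\EE X=0$ together with $\Tr\Cov(X)=d$, which is exactly how the paper invokes the lemma (there $\Cov(X)=\mathrm I$). Second, if $I(X|G)=\infty$ and $\lambda>0$, the right-hand side is $+\infty$, so the conclusion is not automatic; one would need to check that the left side is also infinite rather than dismiss the case. This never arises in the paper's application since $c$-regularity and finite second moment give $J(X)<\infty$, but it is worth stating the finiteness assumption explicitly rather than waving it away.
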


Our proof of \eqref{mainBound} follows a program similar to \cite{toscani1999sharp,villani2003cercignani}, and is conceptually straightforward after the correct ingredients are assembled.  In particular, we begin by recognizing that the LHS of \eqref{landauBound} resembles dissipation of Fisher information when written in the context of $L^2$ projections (cf. \cite[Lemma 3.1]{johnson2004fisher}).  Using the radial symmetry assumption, we are able to bound the Fisher information dissipation from below by  error terms plus entropy production in the Landau equation, which is subsequently bounded by relative Fisher information using Lemma \ref{lem:DV}.  Care must be exercised in order to control  error terms (this is where our regularity assumptions enter), but the final result \eqref{mainBound} closely parallels that proved in \cite{toscani1999sharp}  for the Boltzmann equation.  We remark that the assumption of a non-vanishing Boltzmann collision kernel in \cite{toscani1999sharp} has a symmetrizing effect on the particle density functions involved; the rough analog in the present paper is the radial symmetry assumption.

\subsection*{Organization}
The rest of this paper is organized as follows.  Section \ref{sec:Notation} briefly introduces notation and definitions that are used throughout.  Main results are stated and proved in Section \ref{sec:MainResults}, followed by a brief discussion on potential extensions to non-symmetric distributions.  Section \ref{sec:LSI} gives an application of the results to bounding the deficit in the Gaussian logarithmic Sobolev inequality.

\section{Notation and Definitions}\label{sec:Notation}

For a vector $v\in \mathbb{R}^d$, we let $|v| :=(\sum_{i=1}^d v_i^2  )^{1/2}$ denote its Euclidean norm.   For a random variable $X$ on $\mathbb{R}$ and $p\geq 1$, we write $\|X\|_p := \left( \EE |X|^p \right)^{1/p}$ for the usual $L^p$-norm of $X$.  It will be convenient to  use the same notation for $0 < p <1$, with the understanding that $\| \cdot \|_p$ is not a norm in this case.

Throughout,  $G\sim N(0,\mathrm{I})$ denotes a standard Gaussian random vector on $\mathbb{R}^d$; the dimension will be clear from context.  For a random vector $X$ on $\mathbb{R}^d$, we let $G^X = \sqrt{d^{-1}\EE|X|^2} G$ be a normalized Gaussian vector, so that $\EE|X|^2 = \EE |G^X|^2$.  For $d\geq 2$, we denote the coordinates of a random vector $X$ on $\mathbb{R}^d$ as $(X_1,X_2, \dots, X_d)$.  Thus, for example, $G_1^X$ is a zero-mean Gaussian random variable with variance $d^{-1}\EE|X|^2$.

For a random vector $X$ with smooth density\footnote{All densities are with respect to Lebesgue measure.} $f$, we define the Fisher information
\begin{align}
J(X) = 4 \int \left|\nabla \sqrt f  \right|^2 = \int_{f>0} \frac{|\nabla f|^2}{f} \label{FIidentity}
 \end{align}
 and the entropy 
\begin{align}
h(X) = -\int f \log f,
\end{align}
where `$\log$' denotes the natural logarithm throughout.  For random vectors $X,Q$ with respective densities $f,g$, the relative Fisher information is defined by
\begin{align}
I(X|Q) =   4 \int g \left|\nabla \sqrt {f/g}  \right|^2
\end{align}
and the relative entropy is defined by
\begin{align}
D(X|Q) = \int f \log \frac{f}{g}.%
\end{align}
Evidently, both quantities are nonnegative and 
\begin{align}
&I(X) := I(X|G^X) = J(X) - J(G^X)  &  D(X):=D(X|G^X) = h(G^X) -h(X).
\end{align}

Finally, we recall two basic inequalities that will be taken for granted  several times without explicit reference:  for real-valued $a,b$ we have  $(a+b)^2 \leq 2 a^2 +2 b^2$, and for random variables $X,Y$, we have Minkowski's inequality: $\|X+Y\|_p \leq \|X\|_p + \|Y\|_p$ when $p\geq 1$.

\begin{definition}
A random vector $X$ with density $f$ is radially symmetric if $f(x) = \phi(|x|)$ for some function $\phi : \mathbb{R}\to [0,\infty)$.
\end{definition}

We primarily concern ourselves with random vectors that satisfy certain mild regularity conditions.  In particular, it is sufficient to control  $|\nabla \log  f (x) |$ pointwise in terms of $|x|$. 
\begin{definition}
A random vector $X$ on $\mathbb{R}^d$ with  smooth density $f$ is $c$-regular if, for all $x\in \mathbb{R}^d$,
\begin{align}
|\nabla \log  f (x) | \leq c \left(  |x|+  \EE|X|  \right). \label{cRegIneq}
\end{align}
\end{definition}
We remark that the smoothness requirement of $f$ in the definition of $c$-regularity is stronger than generally required for our purposes.  However, it allows us to  avoid further qualifications; for instance, the identities \eqref{FIidentity} hold for any $c$-regular function.  Moreover, since $\nabla \log f = \tfrac{\nabla f}{f}$  for smooth $f$, we have $J(X)<\infty$ for any $c$-regular $X$ with $\EE|X|^2<\infty$.

Evidently, $c$-regularity  quantifies the smoothness of a density function.  The following important example  shows that any density can be   mollified to make it  $c$-regular.
\begin{proposition}\label{prop:addNoise} \cite{polyWuWasserstein2016}
Let $X$ and $Z$ be independent, where $Z\sim N(0,\sigma^2 \mathrm{I})$.  Then $Y = X+Z$ is $c$-regular with $c = 4  /\sigma^2$. 
\end{proposition}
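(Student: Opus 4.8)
The plan is to compute the score $\nabla\log f_Y$ exactly as a posterior mean, via a Tweedie/Robbins-type identity, and then bound that mean by hand using nothing but $\EE|X|<\infty$ (no smoothness or tail hypothesis on $f_X$, consistent with the statement). Write $\varphi_\sigma$ for the $N(0,\sigma^2\mathrm{I})$ density, so that $f_Y=f_X\ast\varphi_\sigma$ is smooth and strictly positive. Since $\nabla_y\varphi_\sigma(y-x)=-\tfrac{1}{\sigma^2}(y-x)\varphi_\sigma(y-x)$, differentiating under the integral sign (legitimate by the rapid decay of the Gaussian and its derivatives) gives
\begin{align}
\nabla\log f_Y(y)=\frac{\nabla f_Y(y)}{f_Y(y)}=-\frac{1}{\sigma^2}\Bigl(y-\EE[X\mid Y=y]\Bigr),
\end{align}
where the conditional law of $X$ given $Y=y$ has density $x\mapsto\varphi_\sigma(y-x)f_X(x)/f_Y(y)$. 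Consequently $|\nabla\log f_Y(y)|\le\frac{1}{\sigma^2}\bigl(|y|+\EE[\,|X|\mid Y=y\,]\bigr)$, and since $\EE[Z]=0$ and $Z\independent X$ force $\EE[Y\mid X]=X$, Jensen's inequality gives $\EE|X|\le\EE|Y|$. Thus it suffices to prove the pointwise estimate
\begin{align}
\EE[\,|X|\mid Y=y\,]\le 3|y|+4\,\EE|X|\qquad\text{for all }y\in\mathbb{R}^d,\label{eq:postmean}
\end{align}
after which $|\nabla\log f_Y(y)|\le\frac{1}{\sigma^2}(|y|+3|y|+4\EE|X|)=\frac{4}{\sigma^2}(|y|+\EE|X|)\le\frac{4}{\sigma^2}(|y|+\EE|Y|)$. (If $\EE|X|=\infty$ then $\EE|Y|=\infty$ and the assertion is vacuous, so we assume $\EE|X|<\infty$.)

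To establish \eqref{eq:postmean} I would fix the radius $r:=2(|y|+\EE|X|)$ and split the posterior mean according to whether $|x-y|<r$ or $|x-y|\ge r$. On $\{|x-y|<r\}$ one has $|x|\le|y|+r$, and as the posterior density integrates to $1$ this piece is at most $|y|+r=3|y|+2\,\EE|X|$. On $\{|x-y|\ge r\}$ one bounds $\varphi_\sigma(y-x)\le(2\pi\sigma^2)^{-d/2}e^{-r^2/(2\sigma^2)}$ and $\int|x|f_X(x)\,\mathrm{d}x=\EE|X|$, giving at most $\tfrac{(2\pi\sigma^2)^{-d/2}e^{-r^2/(2\sigma^2)}}{f_Y(y)}\,\EE|X|$ for this piece. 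The remaining ingredient is a matching lower bound for $f_Y(y)$: by Markov's inequality $\Pr(|X-y|>r)\le\tfrac{|y|+\EE|X|}{r}=\tfrac12$, so
\begin{align}
f_Y(y)\ge\int_{|x-y|\le r}\varphi_\sigma(y-x)f_X(x)\,\mathrm{d}x\ge(2\pi\sigma^2)^{-d/2}e^{-r^2/(2\sigma^2)}\,\Pr(|X-y|\le r)\ge\tfrac12(2\pi\sigma^2)^{-d/2}e^{-r^2/(2\sigma^2)}.
\end{align}
Inserting this bound, the far piece is at most $2\,\EE|X|$, and \eqref{eq:postmean} follows.

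The single delicate point -- and the reason the argument requires no regularity of $f_X$ -- is the cancellation in the last step: the Gaussian weight $e^{-r^2/(2\sigma^2)}$ controlling the far contribution is killed \emph{exactly} by the one in the lower bound for $f_Y(y)$, because both are governed by the same radius $r=2(|y|+\EE|X|)$. This is why I would avoid the obvious shortcuts: bounding the Radon--Nikodym derivative $\varphi_\sigma(y-x)/f_Y(y)$ of the posterior with respect to $f_X$ by its supremum, or bounding $f_Y(y)$ from below by Jensen's inequality, each loses a super-polynomial factor precisely at those $y$ where $f_Y$ is exponentially small (e.g.\ when $f_X$ is concentrated on a thin shell and $y=0$), so neither survives. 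The Markov-based lower bound on $f_Y(y)$ with this matched choice of $r$ is the crux; the rest is bookkeeping.
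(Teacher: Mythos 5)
Your proof is correct, and it is essentially the argument the paper relies on: the paper cites this result from Polyanskiy--Wu without reproducing it, but the appendix proof of Proposition~\ref{prop:StartRegular} invokes exactly the posterior-mean estimate $\EE[\,|X-v|\mid V=v\,]\le 3|v|+4\,\EE|X|$ that you derive (via Tweedie's identity for the score, a near/far split at the matched radius $r=2(|y|+\EE|X|)$, and Markov's inequality to cancel the Gaussian factor in the lower bound on $f_Y$). Your bookkeeping is correct, including the final passage from $\EE|X|$ to $\EE|Y|$ via Jensen.
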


Observe that, in the notation of the above proposition, if $X$ is radially symmetric then so is $Y$.  Therefore,  Proposition \ref{prop:addNoise} provides a convenient means to construct  radially symmetric random vectors that are $c$-regular.  Indeed, we have the following useful corollaries (proofs are found in the appendix).  

\begin{proposition}\label{prop:StartRegular}
Let $X$ be a random vector on $\mathbb{R}^d$, and let $X_t = e^{-t} X + (1-e^{-2t})^{1/2}G$ for  $t\geq0$.  If $X$ is $c$-regular, then $X_t$ is $(5c\,e^{2t})$-regular.
\end{proposition}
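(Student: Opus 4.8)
The plan is to reduce the problem to controlling $\nabla \log f_t$, where $f_t$ is the density of $X_t$, by exploiting the fact that $X_t$ is itself a convolution: writing $a_t = e^{-t}$ and $b_t = (1-e^{-2t})^{1/2}$, we have $X_t = a_t X + b_t G$, so the density of $X_t$ is the convolution of the (rescaled) density of $X$ with a Gaussian kernel of variance $b_t^2$. The idea is to apply the mechanism behind Proposition \ref{prop:addNoise}: adding Gaussian noise of variance $\sigma^2$ produces a $(4/\sigma^2)$-regular vector. Here the noise variance is $b_t^2 = 1 - e^{-2t}$, which degenerates as $t \to 0$, so Proposition \ref{prop:addNoise} alone gives a bound that blows up; the extra leverage must come from the assumed $c$-regularity of $X$ itself. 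So the strategy is to split $X_t = a_t X + b_t G$ and treat the Gaussian part as the mollifier while carrying the $c$-regularity of $a_t X$ along.

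Concretely, first I would record that $a_t X$ is $(c/a_t)$-regular: if $|\nabla \log f(x)| \le c(|x| + \EE|X|)$, then the density of $a_t X$ is $a_t^{-d} f(x/a_t)$, whose log-gradient at $x$ is $a_t^{-1} (\nabla \log f)(x/a_t)$, bounded by $a_t^{-1} c(|x|/a_t + \EE|X|) \le (c/a_t^2)(|x| + \EE|a_t X|)$ after absorbing constants; so $a_t X$ is $(c e^{2t})$-regular up to a harmless factor. Next, I would establish a general lemma: if $W$ is $c'$-regular with $\EE|W|^2 < \infty$ and $Z \sim N(0,\sigma^2 \mathrm{I})$ is independent of $W$, then $W + Z$ is $c''$-regular with $c''$ controlled by $c'$ (and not blowing up as $\sigma \to 0$), indeed one should get something like $c'' \le c' + (\text{small})$, or at worst $c'' \le C c'$ for an absolute constant $C$. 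The point is that convolving a density whose log-gradient already has linear growth with a Gaussian can only improve the bound: the score of $W+Z$ is a conditional expectation of the score of $W$ (by the standard Bayesian/Tweedie identity $\nabla \log f_{W+Z}(y) = \EE[\nabla \log f_W(W) \mid W + Z = y]$ when $W$ has a smooth density), so $|\nabla \log f_{W+Z}(y)| \le \EE[ c'(|W| + \EE|W|) \mid W+Z = y]$, and then one bounds $\EE[|W| \mid W+Z=y]$ in terms of $|y|$ plus moments using that the Gaussian noise is light-tailed. Combining the two steps with $c' = c e^{2t}$ and tracking constants should yield the factor $5 c e^{2t}$, where the constant $5$ absorbs the moment/conditional-expectation estimates and the $\EE|X| \le e^{t}\EE|X_t| + \text{const}$ type comparisons.

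The main obstacle I anticipate is the second step done \emph{cleanly enough} to get a clean multiplicative constant rather than an additive correction that misbehaves near $t = 0$: one must verify that $\EE[|W| \mid W + Z = y] \lesssim |y| + \EE|W|$ with constants uniform in $\sigma$, which requires a genuine (if elementary) argument about the posterior of $W$ given a noisy observation — the posterior density is proportional to $f_W(w) e^{-|y-w|^2/2\sigma^2}$, and one needs a tail bound on this that does not degrade as $\sigma \to 0$. A convenient route is to use the $c'$-regularity of $W$ to show $f_W$ cannot decay too fast (a log-Lipschitz-type lower bound $f_W(w) \ge f_W(y) \exp(-\int \text{score})$ along the segment from $y$ to $w$), which caps how concentrated the posterior can be and hence keeps $\EE[|W-y| \mid W+Z=y]$ bounded by a constant times $|y| + \EE|W|$ uniformly in $\sigma$. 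Once this conditional-expectation estimate is in hand, the rest is bookkeeping: substitute $c' = ce^{2t}$, collect the constants, and check the worst case gives $5c e^{2t}$. I would also double-check the edge case $t=0$ (where $X_0 = X$ and the claim reads "$X$ is $5c$-regular," trivially implied by $c$-regularity) and the large-$t$ regime (where $X_t \to G$ and $e^{2t}$ growth is clearly not tight but is all that's claimed).
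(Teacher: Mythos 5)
Your proposal matches the paper's proof: the paper also decomposes $X_t = e^{-t}X + (1-e^{-2t})^{1/2}G$, observes by a change of variables that $e^{-t}X$ is $(ce^{2t})$-regular, and then proves the intermediate claim that adding independent Gaussian noise of any variance to a $c'$-regular vector yields a $(5c')$-regular vector, precisely by differentiating the convolution (equivalently, the score identity you invoke) and reducing to the conditional-expectation bound $\EE\bigl[|X-v|\,\big|\,X+Z=v\bigr]\le 3|v|+4\EE|X|$, which it cites from Polyanskiy--Wu. The obstacle you flag — getting this estimate uniformly in $\sigma$ — is exactly what that citation supplies, though via a direct moment argument rather than the log-Lipschitz lower bound on $f_W$ you sketch; your route would likely also work but is not what the paper does.
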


\begin{proposition}\label{prop:approxR}
Let $R_0$ be a non-negative random variable with $\EE R_0^2 = 1$ and distribution function $F_{R_0}$.  For any $d\geq 2$ and $t, \varepsilon>0$, there exists a $(4/\varepsilon)$-regular radially symmetric random vector $X$ on $\mathbb{R}^d$ with $\EE|X|^2 = d$ and $R = \frac{1}{\sqrt{d}}|X|$ satisfying 
\begin{align}
F_{R_0}\left(  \tfrac{ r - \sqrt{(t+1)\varepsilon}}{\sqrt{1-\varepsilon}}  \right)   - e^{-d t^2 /8}   \leq F_{R}(r) & \leq  F_{R_0}\left(  \tfrac{ r + \sqrt{(t+1)\varepsilon}}{\sqrt{1-\varepsilon}}  \right)  +  e^{-d t^2 /8},
\end{align}
where $F_{R}$ is the distribution function of $R$. 
\end{proposition}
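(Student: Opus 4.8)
The plan is to take $X$ to be a rescaled copy of $R_0$ smeared uniformly over concentric spheres and then mollified by an independent Gaussian of variance $\varepsilon$. Concretely, let $U$ be uniform on the unit sphere $S^{d-1}\subset\mathbb R^d$, let $G\sim N(0,\mathrm I)$, take $R_0, U, G$ mutually independent, and set
\begin{align}
X \ :=\ \sqrt{(1-\varepsilon)d}\;R_0\,U \;+\; \sqrt{\varepsilon}\;G .
\end{align}
(We may assume $\varepsilon\in(0,1)$, since for $\varepsilon\geq1$ the quantity $\sqrt{1-\varepsilon}$ in the statement is not real and there is nothing to prove.) Write $X=\tilde V+Z$ with $\tilde V:=\sqrt{(1-\varepsilon)d}\,R_0\,U$ and $Z:=\sqrt{\varepsilon}\,G\sim N(0,\varepsilon\,\mathrm I)$, which are independent. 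Three of the four required properties are then immediate: the law of $\tilde V$ is rotation-invariant, and adding the independent isotropic Gaussian $Z$ keeps the law rotation-invariant while forcing a smooth density, so $X$ is radially symmetric; independence together with $\EE R_0^2=\EE|U|^2=1$, $\EE|G|^2=d$, and $\EE G=0$ gives $\EE|X|^2=(1-\varepsilon)d+\varepsilon d=d$; and Proposition~\ref{prop:addNoise} applied with $\sigma^2=\varepsilon$ shows $X$ is $(4/\varepsilon)$-regular.

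The one substantive point is to compare $R=\tfrac1{\sqrt d}|X|$ with $R_0$. Since $|\tilde V|=\sqrt{1-\varepsilon}\,\sqrt d\,R_0$ and $|Z|=\sqrt\varepsilon\,|G|$, the reverse triangle inequality $\bigl|\,|\tilde V+Z|-|\tilde V|\,\bigr|\leq|Z|$, divided through by $\sqrt d$, gives
\begin{align}
\Bigl|\,R-\sqrt{1-\varepsilon}\,R_0\,\Bigr| \ \leq\ \frac{\sqrt\varepsilon}{\sqrt d}\,|G| .
\end{align}
Put $B:=\bigl\{\tfrac{\sqrt\varepsilon}{\sqrt d}|G|\le\sqrt{(t+1)\varepsilon}\bigr\}$ and $p(t):=\Pr(B^c)=\Pr\bigl(\chi^2_d>(1+t)d\bigr)$. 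On $B$ the previous display forces $\sqrt{1-\varepsilon}R_0-\sqrt{(t+1)\varepsilon}\le R\le\sqrt{1-\varepsilon}R_0+\sqrt{(t+1)\varepsilon}$, so the event $\{R\le r\}$ contains $\{\sqrt{1-\varepsilon}R_0\le r-\sqrt{(t+1)\varepsilon}\}\cap B$ and is contained in $\{\sqrt{1-\varepsilon}R_0\le r+\sqrt{(t+1)\varepsilon}\}\cup B^c$. Using $\Pr(A\cap B)\ge\Pr(A)-p(t)$ on the one side, a union bound on the other, and the independence of $R_0$ from $G$, we arrive at
\begin{align}
F_{R_0}\!\left(\tfrac{r-\sqrt{(t+1)\varepsilon}}{\sqrt{1-\varepsilon}}\right) - p(t) \ \leq\ F_R(r) \ \leq\ F_{R_0}\!\left(\tfrac{r+\sqrt{(t+1)\varepsilon}}{\sqrt{1-\varepsilon}}\right) + p(t).
\end{align}

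It therefore remains only to establish the concentration estimate $p(t)\le e^{-dt^2/8}$, and I expect this to be the sole place where any genuine computation is required. By the Chernoff inequality applied with the chi-squared moment generating function $\EE e^{\theta\chi^2_d}=(1-2\theta)^{-d/2}$ and optimized over $\theta\in(0,\tfrac12)$, one gets $p(t)\le\exp\!\bigl(-\tfrac d2(t-\log(1+t))\bigr)$; since $t-\log(1+t)=\int_0^t\tfrac{s}{1+s}\,ds\ge\tfrac{t^2}{2(1+t)}$, this gives $p(t)\le e^{-dt^2/(4(1+t))}$, which is at most $e^{-dt^2/8}$ for $t\in(0,1]$. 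The chi-squared tail is thus the crux; the regime $t>1$ requires a small amount of extra care — one either keeps the weaker factor $e^{-dt^2/(4(1+t))}$ that the estimate actually produces, or observes that the asserted inequalities are essentially vacuous there, since $\sqrt{(t+1)\varepsilon}$ then leaves the range of $r$ under consideration.
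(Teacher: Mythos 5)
Your construction and argument coincide with the paper's essentially line for line: set $X=\sqrt{(1-\varepsilon)d}\,R_0U+\sqrt\varepsilon\,G$, check radial symmetry, $\EE|X|^2=d$, $(4/\varepsilon)$-regularity via Proposition~\ref{prop:addNoise}, and then sandwich $F_R$ using (reverse) triangle inequalities together with a chi-squared tail estimate. The only cosmetic difference is that you package the two triangle-inequality steps into a single two-sided bound $\bigl|R-\sqrt{1-\varepsilon}R_0\bigr|\le\frac{\sqrt\varepsilon}{\sqrt d}|G|$ and use $\Pr(A\cap B)\ge\Pr(A)-\Pr(B^c)$, where the paper handles the two directions separately (using independence and a product on one side, a union bound on the other); these give the same final inequalities.

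Your scrutiny of the chi-squared tail is in fact warranted, and it is the one place where your write-up is more careful than the paper. The paper simply invokes ``the tail bound for a chi-squared random variable'' to assert $\Pr\{\tfrac1d|G|^2-1>t\}\le e^{-dt^2/8}$, but as you observe this fails for $t$ large: for $d=2$ the tail is exactly $e^{-(1+t)}$, which exceeds $e^{-dt^2/8}=e^{-t^2/4}$ as soon as $t>2+2\sqrt2$. Your Chernoff computation correctly yields $p(t)\le e^{-dt^2/(4(1+t))}$, which matches $e^{-dt^2/8}$ only for $t\le1$, so strictly speaking the proposition (and the paper's proof) should be read with $t$ restricted to a bounded range, say $t\in(0,1]$; this suffices for the intended use of the proposition (approximating $R_0$ to arbitrary precision by letting $d\to\infty$ at fixed small $t,\varepsilon$), so nothing downstream is affected. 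In short: same approach, and your flagged caveat is a genuine (if minor) imprecision in the paper, not a gap in your own argument.
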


\section{Main Results } \label{sec:MainResults}
In this section, we establish quantitative estimates on entropy production and Fisher information dissipation under rescaled convolution.  As can be expected, we begin with an inequality for Fisher information, and then obtain a corresponding entropy jump inequality by integrating  along the Ornstein-Uhlenbeck semigroup. 

\subsection{Dissipation of Fisher Information under Rescaled Convolution}

 \begin{theorem}\label{thm:FIjumps}
Let $X, X_*$ be IID radially symmetric random vectors on $\mathbb{R}^d$,  $d\geq 2$, with $c$-regular density $f$.  For any $\varepsilon > 0 $
\begin{align}
J(X)-J(\tfrac{1}{\sqrt{2}}(X+X_*)) &\geq K_{\varepsilon}(X)    I(X)^{1+\varepsilon}, \label{FIIjump}
\end{align}
where 
\begin{align}
K_{\varepsilon}(X) = 
\frac{(\varepsilon / 8)^{\varepsilon}  }{   \left( 8 (1+\varepsilon) \right)^{1+\varepsilon}     }  
\cdot
\frac{  \||X|^2\|_1^{1+\varepsilon} }{c^{{2 \varepsilon}}  \left\|  |X|^2 \right\|_{ 2+1/\varepsilon}^{1+2 \varepsilon}   } 
. \label{KpX_FI}
\end{align}
\end{theorem}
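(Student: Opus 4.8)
\noindent\emph{Proof plan.} Write $Y=\tfrac1{\sqrt2}(X+X_*)$, $Z=X+X_*$, $\rho=\nabla\log f$, and $W=\rho(X)+\rho(X_*)$. A standard identity implicit in Blachman's proof of the Fisher information convolution inequality \cite{blachman1965convolution} is
\[
J(X)-J(Y)=\tfrac12\,\EE\,\bigl|W-\EE[W\mid Z]\bigr|^2 .
\]
The first step is to show, using radial symmetry, that the right side dominates $\tfrac12\,\EE\bigl|\Pi(X-X_*)(\rho(X)-\rho(X_*))\bigr|^2$, the unweighted analogue of the left side of \eqref{landauBound}. Radial symmetry enters twice. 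Since radial densities are $O(d)$–invariant and $\rho$ is $O(d)$–equivariant, the map $z\mapsto\EE[W\mid Z=z]$ is $O(d)$–equivariant, hence parallel to $z$ (an $O(d)$–equivariant map $\mathbb R^d\to\mathbb R^d$ must carry $z$ to a multiple of $z$); therefore $\EE[W\mid Z]=\EE[\langle W,\hat Z\rangle\mid Z]\,\hat Z$, and conditional Jensen gives $\EE|\EE[W\mid Z]|^2\le\EE[\langle W,\hat Z\rangle^2]$, so
\[
J(X)-J(Y)\ \ge\ \tfrac12\,\EE\bigl[\,|W|^2-\langle W,\hat Z\rangle^2\,\bigr]=\tfrac12\,\EE\bigl|\Pi(X+X_*)(\rho(X)+\rho(X_*))\bigr|^2 .
\]
Since $f$ is even, $-X_*$ has the same law as $X_*$ and is independent of $X$, and $\rho$ is odd; replacing $X_*$ by $-X_*$ converts the last expression into $\tfrac12\,\EE\bigl|\Pi(X-X_*)(\rho(X)-\rho(X_*))\bigr|^2$, as desired.

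Let $\mathcal D_L(X)$ denote the left side of \eqref{landauBound}, and split it according to $|X-X_*|^2\le T$ or $>T$. On the first event the weight $|X-X_*|^2$ is at most $T$, so the step above gives
\[
\mathcal D_L(X)\ \le\ \tfrac T2\,\EE\bigl|\Pi(X-X_*)(\rho(X)-\rho(X_*))\bigr|^2+\mathcal D_L^{>T}(X)\ \le\ T\,\bigl(J(X)-J(Y)\bigr)+\mathcal D_L^{>T}(X),
\]
where $\mathcal D_L^{>T}(X)$ is the part with $|X-X_*|^2>T$. For the tail I use $|\Pi(\cdots)(\rho(X)-\rho(X_*))|^2\le 2|\rho(X)|^2+2|\rho(X_*)|^2$ together with $c$–regularity, $|\rho(x)|\le c(|x|+\EE|X|)$: the integrand is then $c^2$ times a degree–four polynomial in $|X|,|X_*|$ (coefficients bounded in terms of $\EE|X|^2$) restricted to a set of small probability, and Hölder's inequality with exponent $1+\tfrac1{2\varepsilon}$ together with Markov's inequality applied to $\EE|X-X_*|^{4+2/\varepsilon}$ give $\mathcal D_L^{>T}(X)\le c^2\,B_\varepsilon(X)\,T^{-1/\varepsilon}$, where $B_\varepsilon(X)$ is an explicit constant multiple of $\||X|^2\|_{2+1/\varepsilon}^{\,2+1/\varepsilon}$. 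This tail estimate is the only place the moment hypothesis is used.

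To finish, observe that $\mathcal D_L$ is invariant under the rescaling $X\mapsto sX$, so Lemma \ref{lem:DV}, applied after normalizing $\EE|X|^2$ to $d$ (for radial $X$ one has $\lambda=\EE|X|^2/d$ and $G^X=G$), gives $\mathcal D_L(X)\ge\frac{d-1}{d}\,\EE|X|^2\,I(X)\ge\tfrac12\||X|^2\|_1\,I(X)$ because $d\ge2$. Combining with the previous display,
\[
J(X)-J(Y)\ \ge\ \frac1T\Bigl(\tfrac12\||X|^2\|_1\,I(X)-c^2B_\varepsilon(X)\,T^{-1/\varepsilon}\Bigr)\qquad\text{for every }T>0,
\]
and the right side is maximized at $T\asymp\bigl(c^2B_\varepsilon(X)/I(X)\bigr)^{\varepsilon}$, with maximal value of the form $K_\varepsilon(X)\,I(X)^{1+\varepsilon}$; tracking the elementary constants produced by this one–variable optimization and by Hölder's inequality reproduces \eqref{KpX_FI} exactly.

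The conceptual obstacle is the first step: the dissipation $J(X)-J(Y)$ a priori controls only fluctuations of the \emph{symmetric} score $\rho(X)+\rho(X_*)$, while Lemma \ref{lem:DV} concerns the \emph{antisymmetric} Landau integrand $\Pi(X-X_*)(\rho(X)-\rho(X_*))$; passing from one to the other needs both that $\EE[W\mid Z]$ be parallel to $Z$ and that the law be preserved under $X_*\mapsto-X_*$, and neither holds without radial symmetry — this is precisely the feature that replaces the spectral–gap or log–concavity assumptions of earlier work. The technical obstacle is the tail estimate: since $c$–regularity bounds $\nabla\log f$ but nothing higher, the weight $|x-x_*|^2$ cannot be absorbed uniformly, which forces the truncation and degrades the otherwise linear estimate to the exponent $1+\varepsilon$.
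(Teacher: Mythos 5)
Your proof is correct and follows essentially the same route as the paper: the Blachman-type variance identity you use is the paper's projection identity from \cite{johnson2004information} up to the sign convention $X_*\mapsto -X_*$ (which you handle explicitly by symmetry), your $O(d)$-equivariance plus conditional-Jensen argument reproduces the paper's one-line observation that the tangential component of the score of a radial density vanishes, and the truncation at level $T=R^2$, the application of Lemma~\ref{lem:DV}, the H\"older/Markov control of the tail via $c$-regularity, and the single-parameter optimization all match the paper's steps. The only gap is that you assert rather than verify that the final bookkeeping reproduces the constant in~\eqref{KpX_FI}, but the structure you lay out is exactly the one the paper carries through.
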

 
 \begin{remark}
 We have made no attempt to optimize the constant $K_{\varepsilon}(X)$.  %
 \end{remark}
 
 A few comments are in order.  First, we note that  inequality \eqref{FIIjump} is invariant to scaling $t: X \mapsto tX$ for $t>0$.  Indeed, if $X$ is $c$-regular, then a change of variables shows that $tX$ is $(c /t^2)$-regular. So, using homogeneity of the norms, we find that 
 \begin{align}
 K_{\varepsilon}(t X) = t^{2\varepsilon} K_{\varepsilon}(X).
 \end{align}
 Combined with the property that $t^2 J(t X) = J(X)$, we have  
 \begin{align}
 K_{\varepsilon}(tX)    I(tX|G^{tX})^{1+\varepsilon} = t^{-2} K_{\varepsilon}(X)I(X|G^{X})^{1+\varepsilon},
 \end{align}
 which has the same scaling behavior as the LHS of \eqref{FIIjump}.  That is, 
 \begin{align}
 J(tX)-J(\tfrac{1}{\sqrt{2}}(tX+tX_*)) = t^{-2}\left( J(X)-J(\tfrac{1}{\sqrt{2}}(X+X_*))\right).
 \end{align}
 
 Second, inequality \eqref{FIIjump} does not contain any terms that explicitly depend on dimension.  However, it is impossible to say that inequality \eqref{FIIjump} is dimension-free in the usual sense that both sides scale linearly in dimension when considering product distributions.  Indeed, the product of two identical radially symmetric densities is again radially symmetric if and only if the original densities were Gaussian themselves, which corresponds to the degenerate case when the dissipation of Fisher information is identically zero. However, inequality  \eqref{FIIjump} does exhibit  dimension-free behavior in the following sense:   Suppose for simplicity that $X$ is normalized so that $\EE|X|^2=d$.    Since $X$ is radially symmetric, it can be expressed as the product of independent random variables $X =\sqrt{d} R  U$, where $U$ is uniform on the $(d-1)$-dimensional sphere $\mathbb{S}^{d-1}$ and $R=\frac{1}{\sqrt{d}}|X|$ is a nonnegative real-valued random variable satisfying $\EE R^2 = 1$.   Now, by the log Sobolev inequality and Talagrand's   inequality, we have
\begin{align}
I(X|G) \geq 2 D(X|G) \geq  W_2^2 \Big( (\sqrt{d} R U) , G \Big)  = d \, W_2^2 \Big( R   , \tfrac{1}{\sqrt{d}} |G|  \Big) 
\end{align}
The equality follows since, for any vectors $v,w$ we have  $|v-w|^2 \geq \big| |v|-|w|\big|^2$.  However, this can be achieved with equality by the coupling $G = |G| U$.  Thus, we have
\begin{align}
K_{\varepsilon}(X)    I(X|G)^{1+\varepsilon} &\geq \frac{(\varepsilon / 8)^{\varepsilon} \, d^{1+2 \varepsilon}W_2^{2\varepsilon} \Big( R   , \tfrac{1}{\sqrt{d}} |G|  \Big)   }{c^{{2 \varepsilon}}  \left( 8 (1+\varepsilon) \right)^{1+\varepsilon} \, \left\|  |X|^2 \right\|_{ 2+1/\varepsilon}^{1+2 \varepsilon}   }   I(X|G)\\
&= \frac{(\varepsilon / 8)^{\varepsilon} \,  W_2^{2\varepsilon} \Big( R   , \tfrac{1}{\sqrt{d}} |G|  \Big)   }{c^{{2 \varepsilon}}  \left( 8 (1+\varepsilon) \right)^{1+\varepsilon} \, \left\| R \right\|_{ 2+1/\varepsilon}^{1+2 \varepsilon}   }   I(X|G).
\end{align}
Now, we note that $\EE R^2 = \EE (\tfrac{1}{\sqrt d} |G|)^2 = 1$, so we have a bound of the form
\begin{align}
J(X)-J(\tfrac{1}{\sqrt{2}}(X+X_*)) &\geq \widetilde{K}_{\varepsilon}(X)    I(X|G^X),
\end{align}
where the function  $\widetilde{K}_{\varepsilon}(X)$ is effectively dimension-free  in that it only depends on the (one-dimensional) quadratic Wasserstein distance between $R$ and  $\tfrac{1}{\sqrt{d}} |G|$.   For $d\to \infty$, the law of large numbers implies that $\tfrac{1}{\sqrt{d}} |G|\to 1$ a.s. Therefore, $ W_2^2 \Big( R   , \tfrac{1}{\sqrt{d}} |G|  \Big)$ behaves similarly to $\EE|R-1|^2 \geq \operatorname{Var}(R)$ high dimensions.  Indeed, by the triangle inequality applied to $W_2$, 
\begin{align}
\left| W_2 \Big( R   , \tfrac{1}{\sqrt{d}} |G|  \Big) - \|R-1\|_2  \right| \leq \left\|\tfrac{1}{\sqrt{d}} |G| -1\right\|_2  = O\left(\tfrac{1}{\sqrt{d}}\right). %
\end{align}
So, we see that \eqref{FIIjump} depends very weakly on $d$ when the marginal distribution of $|X|$ is preserved and dimension varies. 

One important question remains: As dimension $d\to\infty$, do there exist   random vectors $X$ on $\mathbb{R}^d$ with sufficient regularity for which the associated random variable $R$ is not necessarily  concentrated around $1$?  The answer to this is affirmative in the sense of Proposition \ref{prop:approxR}: we may approximate any distribution function $F_{R_0}$ to within arbitrary accuracy, at the (potential) expense of increasing the regularity parameter $c$.   
 
\begin{proof}[Proof of Theorem \ref{thm:FIjumps}]
As remarked above, inequality \eqref{FIIjump} is invariant to scaling.  Hence, there is no loss of generality in assuming that $X$ is normalized according to $\EE|X|^2 = d$.  Also, since $X$ is radially symmetric, $X-X_*$ is equal to $X+X_*$ in distribution, therefore we seek to  lower bound the quantity 
\begin{align}
J(X) - J(\tfrac{1}{\sqrt{2}}(X+X_*))= J(X) - 2 J(X-X_*).
\end{align}
Toward this end, define $W = X-X_*$, and denote its density by $f_W$.  By the projection property of the score function of sums of independent random variables,  the following identity holds (e.g., \cite[Lemma 3.4]{johnson2004information}):  
\begin{align}
2 \left( J(X) - 2 J(X-X_*)\right) &= \EE\left| 2 \rho_W(W) - \left( \rho(X) - \rho(X_*)\right) \right|^2,
\end{align}
where $\rho = \nabla \log f$ is the score function of $X$ and $\rho_W = \nabla \log f_W$ is the score function of $W$.  

For $v\in \mathbb{R}^d$, let $\Pi(v)$ denote the orthogonal projection onto the subspace  orthogonal to  $v$. Now, we have
\begin{align}
2 J(X) - 4 J(X-X_*) &= \EE\left| 2 \rho_W(W) - \left( \rho(X) - \rho(X_*)\right) \right|^2\\
&\geq \EE\left| 2 \Pi(W) \rho_W(W) - \Pi(X-X_*)\left( \rho(X) - \rho(X_*)\right) \right|^2\label{eq1symm} \\
&= \EE\left|  \Pi(X-X_*)\left( \rho(X) - \rho(X_*)\right) \right|^2.\label{eq2symm}
\end{align}
The inequality follows since $\Pi(w) = \Pi(x-x_*)$ by definition, and $|v| \geq |\Pi(w) v|$ since $\Pi(w)$ is an orthogonal projection.  The last equality follows since $\Pi(w) \rho_W(w) =0$ due to the fact that $\Pi(w) \nabla f_W(w)$ is the tangential gradient of $f_W$, which is identically zero due to radial symmetry of $f_W$. 

Next, for any $R>0$, use the inequality
\begin{align}
1 \geq \frac{|x-x_*|^2}{ R^2} -  \frac{|x-x_*|^2}{ R^2} \mathbf{1}_{\{|x-x_*|> R\}}
 \end{align}
 to  conclude that 
\begin{align}
2 J(X) - 4 J(X-X_*) &\geq \frac{1}{R^2} \EE\left[\left| X-X_*\right|^2\left|  \Pi(X-X_*)\left( \rho(X) - \rho(X_*)\right) \right|^2\right]\notag \\
&-\frac{1}{R^2} \EE\left[\left| X-X_*\right|^2\left|  \Pi(X-X_*)\left( \rho(X) - \rho(X_*)\right) \right|^2  \mathbf{1}_{\{|X-X_*|> R\}}\right].
\end{align}
We bound the second term first.  By $c$-regularity and the triangle inequality, we have
\begin{align}
\left|  \Pi(x-x_*)\left( \rho(x) - \rho(x_*)\right)\right| \leq \left|  \rho(x) - \rho(x_*)\right| \leq c(|x|+|x_*|)+2c\EE|X|. \end{align}
So,  noting 
the inclusion 
\begin{align}
\{|x-x_*|> R\} \supseteq \{|x|\geq R/2, |x_*|\leq |x|   \} \cup\{|x_*|\geq R/2, |x|\leq |x_*|   \},
\end{align} we have the pointwise inequality
\begin{align}
&\mathbf{1}_{\{|x-x_*|> R\}}  |x-x_*|^2 \left| \Pi(x-x_*)  \left( \rho(x) - \rho(x_*)\right)  \right|^2  \\
&\leq   \mathbf{1}_{\{|x|\geq R/2, |x_*|\leq |x|   \} \cup\{|x_*|\geq R/2, |x|\leq |x_*|   \} }  |x-x_*|^2   \left|   \rho(x) - \rho(x_*)  \right|^2\\
&\leq \mathbf{1}_{\{|x|\geq R/2     \} }4 |x|^2   \left(   2c|x|+2 c\EE|X| \right)^2 + \mathbf{1}_{\{|x_*|\geq R/2     \} }4 |x_*|^2   \left(   2c|x_*|+2 c\EE|X|  \right)^2.
\end{align}
Taking expectations and using the fact that $X,X_*$ are IID, we have for any conjugate exponents $p,q\geq 1$ and $\beta>0$, 
\begin{align}
&\EE\left[\left| X-X_*\right|^2\left|  \Pi(X-X_*)\left( \rho(X) - \rho(X_*)\right) \right|^2  \mathbf{1}_{\{|X-X_*|> R\}}\right]\notag\\
&\leq 16 \,c^2\, \EE\left[\left| X \right|^2\left(   |X| +  \EE|X| \right)^2  \mathbf{1}_{\{|X|\geq R/2\}}\right]\\
&\leq  32 \,  c^2 \EE\left[\left| X \right|^4  \mathbf{1}_{\{|X|> R/2\}}\right] + 
           32 \,  c^2 \, \left(\EE|X|\right)^2 \,  \EE\left[\left| X \right|^2  \mathbf{1}_{\{|X|\geq R/2\}}\right]\\
&\leq  32 \,  c^2   \,\left\| |X|^2 \right\|_{ 2p}^2\left( \Pr\left\{ |X|\geq R/2\right\}\right)^{1/q}  
        + 32 \,  c^2 \, \left(\EE|X|\right)^2\,  \left\| |X|^2 \right\|_{ p} \left( \Pr\left\{ |X|\geq R/2\right\}\right)^{1/q}\\
&\leq  32 \,  c^2   \left\| |X|^2 \right\|_{ 2p}^2  \left( \frac{\EE|X|^{\beta}}{(R/2)^{\beta}}\right)^{1/q}
   + 32 \,  c^2 \left(\EE|X|\right)^2\,    \left\| |X|^2 \right\|_{ p}\left( \frac{\EE|X|^{\beta}}{(R/2)^{\beta}}\right)^{1/q}\\
&=  \frac{32\cdot 2^{\beta/q} \,  c^2}{R^{\beta /q}} \left(    \left\|  |X|^2 \right\|^2_{ 2p}  
   +   \left\| |X|^2 \right\|_{1/2}\,   \left\| |X|^2 \right\|_{ p}  \right) \left(\EE|X|^{\beta}\right)^{1/q}\\
&\leq  \frac{64\cdot 2^{\beta/q} \,  c^2}{R^{\beta /q}}   \left\|  |X|^2 \right\|^2_{ 2p}    \left\||X|^{2}\right\|_{\beta/2}^{\beta/(2q)} . 
\end{align}
Since $\EE|X|^2 = d$, radial symmetry implies $\Cov(X) = \mathrm{I}$.  Therefore, by Lemma \ref{lem:DV}, we have
\begin{align}
 \EE\left[\left| X-X_*\right|^2\left|  \Pi(X-X_*)\left( \rho(X) - \rho(X_*)\right) \right|^2\right] \geq 2(d-1) I(X|G).
\end{align}
Continuing from above, we have proved that
\begin{align}
J(X) - 2 J(X-X_*)&\geq \frac{d-1  }{R^2} I(X|G) -\frac{32\cdot 2^{\beta /q} \,  c^2 }{R^{2+\beta/q}}  \left\|  |X|^2 \right\|^2_{ 2p}    \left\||X|^{2}\right\|_{\beta/2}^{\beta/(2q)}. 
\end{align}
For any $s > 0$, Taking $R = \left( \frac{b(2+s)}{2a}\right)^{1/s}$ yields the identity
\begin{align}
\frac{a}{R^2}-\frac{b}{R^{2+s}} = \frac{1}{1+2/s}\left( \frac{2/s}{b(1+2/s)}\right)^{2/s} a^{1+2/s}.
\end{align}
So, putting  $\varepsilon = 2q/\beta$, $p = 1+1/(2\varepsilon)$ and simplifying, we obtain 
\begin{align}
J(X) - 2 J(X-X_*)&\geq \frac{(\varepsilon /8)^{\varepsilon}}{\left( 4 (1+\varepsilon) \right)^{1+\varepsilon}}\left(\frac{ 1 }{   c \left\|  |X|^2 \right\|_{ 2p}     }\right)^{2 \varepsilon} \left( \frac{1}{ \left\||X|^{2}\right\|_{q/\varepsilon}}\right)\left( (d-1) I(X|G)\right) ^{1+\varepsilon}\\
&\geq \frac{(\varepsilon /8)^{\varepsilon}}{\left( 8 (1+\varepsilon) \right)^{1+\varepsilon}}\left(\frac{ 1 }{   c \left\|  |X|^2 \right\|_{ 2p}     }\right)^{2 \varepsilon} \left( \frac{1}{ \left\||X|^{2}\right\|_{q/\varepsilon}}\right)\left( d\,  I(X|G)\right) ^{1+\varepsilon}\\
&= \frac{(\varepsilon / 8)^{\varepsilon}}{c^{2\varepsilon }\left( 8 (1+\varepsilon) \right)^{1+\varepsilon}}\left(\frac{ 1 }{   \left\|  |X|^2 \right\|_{ 2+\varepsilon^{-1}}     }\right)^{1+2 \varepsilon}  \left( d \, I(X|G)\right) ^{1+\varepsilon}\\
&= \frac{(\varepsilon / 8)^{\varepsilon} \, \||X|^2\|_1^{1+\varepsilon} }{c^{{2 \varepsilon}}  \left( 8 (1+\varepsilon) \right)^{1+\varepsilon} \, \left\|  |X|^2 \right\|_{ 2+1/\varepsilon}^{1+2 \varepsilon}   } \, I(X|G)  ^{1+\varepsilon},\label{FI_jump_beforeSimplify}
\end{align}
where we have made use of the crude bound $(d-1)/d\geq 1/2$ and substituted $d = \EE|X|^2 = \| |X|^2\|_1$.
\end{proof}

\subsection{Entropy Production under Rescaled Convolution }

As one would expect, we may `integrate up' in Theorem \ref{thm:FIjumps} to obtain an entropic version.  A precise version of the result  stated in Section \ref{sec:Intro} is given as follows:

\begin{theorem}\label{thm:Hjump}
Let $X, X_*$ be IID radially symmetric random vectors on $\mathbb{R}^d$,  $d\geq 2$, with $c$-regular density $f$.  For any $\varepsilon > 0 $
\begin{align}
h(\tfrac{1}{\sqrt{2}}(X+X_*))-h(X)    
   &\geq   C_{\varepsilon}(X)  \,D(X)^{1+\varepsilon} ,\label{EJineq}
\end{align}
where
\begin{align}
C_{\varepsilon}(X) =  \left( \frac{d\varepsilon}{1+( d+2)\varepsilon}\right)^{1+2\varepsilon}
       \frac{  2^4 \, (d/100)^{\varepsilon}  }
   {   
   \left( 2^8 (1+\varepsilon) (1+2\varepsilon) \right)^{1+\varepsilon} 
   } \cdot \frac{   \||X|^2\|_1  }{ c^{2\varepsilon}\, \left\|  |X|^2 \right\|_{ 2+1/\varepsilon}^{1+2 \varepsilon}  } .
  \end{align}
\end{theorem}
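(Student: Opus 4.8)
The plan is to integrate the Fisher‑information estimate of Theorem~\ref{thm:FIjumps} along the Ornstein--Uhlenbeck semigroup, exactly as one passes from Stam's inequality to the entropy power inequality. Since both sides of \eqref{EJineq} are invariant under $X\mapsto tX$, I first normalize $\EE|X|^2 = d$, so that radial symmetry gives $\Cov(X) = \Cov(\tfrac1{\sqrt2}(X+X_*)) = \mathrm{I}$. Write $X_t = e^{-t}X + (1-e^{-2t})^{1/2}G$ and let $X_{*,t}$ be an independent copy; a short computation shows $\tfrac1{\sqrt2}(X_t + X_{*,t})$ has the same law as $Y_t := e^{-t}\big(\tfrac1{\sqrt2}(X+X_*)\big) + (1-e^{-2t})^{1/2}G$. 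The de Bruijn identity along the flow reads $\tfrac{d}{dt}h(X_t) = J(X_t) - d = I(X_t)$, and similarly for $Y_t$; since $X_t,Y_t\to G$ and $I(X_t),I(Y_t)\to 0$, integrating in $t$ yields
\begin{align}
h(\tfrac1{\sqrt2}(X+X_*)) - h(X) \;=\; \int_0^\infty \big( J(X_t) - J(Y_t)\big)\,\d t \qquad\text{and}\qquad D(X) = \int_0^\infty I(X_t)\,\d t.\notag
\end{align}

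Next I would apply Theorem~\ref{thm:FIjumps} to $X_t$, which is again radially symmetric, has $\EE|X_t|^2 = d$, and, by Proposition~\ref{prop:StartRegular}, is $(5c\,e^{2t})$‑regular; this gives $J(X_t) - J(Y_t) \ge K_\varepsilon(X_t)\,I(X_t)^{1+\varepsilon}$ with $K_\varepsilon(X_t)$ as in \eqref{KpX_FI}, hence proportional to $e^{-4\varepsilon t}\,\||X_t|^2\|_{2+1/\varepsilon}^{-(1+2\varepsilon)}$. The residual $t$‑dependence through $\||X_t|^2\|_{2+1/\varepsilon}$ I would control crudely and uniformly: Minkowski's inequality gives $\|X_t\|_q \le e^{-t}\|X\|_q + (1-e^{-2t})^{1/2}\|G\|_q$, and combined with a $\chi^2_d$‑moment bound $\||G|^2\|_{2+1/\varepsilon}\le d+2+1/\varepsilon$ and $\||X|^2\|_{2+1/\varepsilon}\ge\||X|^2\|_1 = d$, one gets $\||X_t|^2\|_{2+1/\varepsilon}\le B_{d,\varepsilon}\,\||X|^2\|_{2+1/\varepsilon}$ for all $t\ge 0$, and therefore a lower bound $K_\varepsilon(X_t)\ge A\,e^{-4\varepsilon t}$, where $A$ gathers the explicit numerical and dimensional constants displayed in $C_\varepsilon(X)$. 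Using Jensen's inequality for the convex map $x\mapsto x^{1+\varepsilon}$ against the probability measure $4\varepsilon\,e^{-4\varepsilon t}\,\d t$ on $[0,\infty)$, this gives
\begin{align}
h(\tfrac1{\sqrt2}(X+X_*)) - h(X) \;\ge\; A\int_0^\infty e^{-4\varepsilon t}\,I(X_t)^{1+\varepsilon}\,\d t \;\ge\; A\,(4\varepsilon)^{\varepsilon}\Big(\int_0^\infty e^{-4\varepsilon t}\,I(X_t)\,\d t\Big)^{1+\varepsilon}.\notag
\end{align}
Finally, the Gaussian logarithmic Sobolev inequality enters to bound the last integral below: $\tfrac{d}{dt}D(X_t) = -I(X_t)\le -2D(X_t)$ forces $D(X_t)\le e^{-2t}D(X)$, so integration by parts together with $\int_0^\infty I(X_t)\,\d t = D(X)$ gives $\int_0^\infty e^{-4\varepsilon t}I(X_t)\,\d t = D(X) - 4\varepsilon\int_0^\infty e^{-4\varepsilon t}D(X_t)\,\d t \ge D(X)/(1+2\varepsilon)$. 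Putting the pieces together yields $h(\tfrac1{\sqrt2}(X+X_*)) - h(X)\ge A\,(4\varepsilon)^{\varepsilon}(1+2\varepsilon)^{-(1+\varepsilon)}D(X)^{1+\varepsilon}$, and a careful accounting of $A$, $B_{d,\varepsilon}$ and the moment bound produces the stated $C_\varepsilon(X)$.

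I expect the main obstacle to be the degradation of regularity along the semigroup: $X_t$ is only $(5c\,e^{2t})$‑regular, which inserts the weight $e^{-4\varepsilon t}$ into the integrand and naively threatens to make the right side collapse. The point is that this decay is exactly compensated by the exponential decay $D(X_t)\le e^{-2t}D(X)$ supplied by the log‑Sobolev inequality — a fixed fraction $1/(1+2\varepsilon)$ of the total relative entropy $D(X)=\int_0^\infty I(X_t)\,\d t$ is accumulated at times where $e^{-4\varepsilon t}$ is bounded away from $0$ — after which convexity (Jensen) transfers the estimate to the $(1+\varepsilon)$‑th power. Secondary, essentially routine, technical points are the uniform‑in‑$t$ control of $\||X_t|^2\|_{2+1/\varepsilon}$ (hence of $K_\varepsilon(X_t)$) via Minkowski and $\chi^2_d$‑moments, and the justification of the de Bruijn identity and of the limits as $t\to\infty$ under the standing $c$‑regularity and moment hypotheses.
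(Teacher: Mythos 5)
Your proposal is correct and follows essentially the same route as the paper: scale-normalize, integrate Theorem~\ref{thm:FIjumps} along the Ornstein--Uhlenbeck semigroup using Proposition~\ref{prop:StartRegular} to track $c$-regularity, control $\||X_t|^2\|_{2+1/\varepsilon}$ uniformly via Minkowski and $\chi^2_d$ moments, apply Jensen against the measure $4\varepsilon e^{-4\varepsilon t}\,\d t$, and absorb the exponential weight through the decay of entropy/information along the flow. The only cosmetic difference is in the last step --- you get $\int_0^\infty e^{-4\varepsilon t}I(X_t|G)\,\d t \ge D(X)/(1+2\varepsilon)$ by integration by parts plus $D(X_t)\le e^{-2t}D(X)$, whereas the paper uses $I(X_{t+s}|G)\le e^{-2s}I(X_t|G)$ and a change of variables $u=(1+2\varepsilon)t$; both yield the same factor $1/(1+2\varepsilon)$.
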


 \begin{remark}
 Although the constant $C_{\varepsilon}(X)$ appears to grow favorably with dimension $d$, this dimension-dependent growth can cancel to give a bound that is effectively dimension-free.  An illustrative example follows the proof. 
 \end{remark}
 
\begin{proof}
Similar to  before, the inequality \eqref{EJineq} is scale-invariant.  Indeed, all relative entropy terms are invariant to scaling $t  : X\mapsto t X$, and we also have $C_{\varepsilon}(tX) =C_{\varepsilon}(X) $ due to $tX$ being $(c/t^2)$-regular if $X$ is $c$-regular and homogeneity of the norms.  Thus, we may assume without loss of  generality that  $X$ is normalized so that $\EE|X|^2 = d$.   
Next, define $W = \tfrac{1}{\sqrt{2}}(X+X_*)$, and let $X_t, W_t$ denote the Ornstein-Uhlenbeck evolutes of $X$ and $W$, respectively. That is, for $t\geq 0$
\begin{align}
&X_t = e^{-t} X + (1-e^{-2t})^{1/2}G, & W_t = e^{-t} W + (1-e^{-2t})^{1/2}G.
\end{align}

By Proposition \ref{prop:StartRegular}, $X_t$ is $(5 c e^{2t})$-regular for all $t\geq 0$. Noting that $\EE|X_t|^2  = \EE|X|^2$, an application of Theorem \ref{thm:FIjumps}  gives 
\begin{align}
 I(X_t|G) - I(W_t|G)  &\geq \frac{(\varepsilon / 8)^{\varepsilon} \, \||X_t|^2\|_1^{1+\varepsilon} }{(5c)^{{2 \varepsilon}}  \left( 8 (1+\varepsilon) \right)^{1+\varepsilon} \, \left\|  |X_t|^2 \right\|_{ 2+1/\varepsilon}^{1+2 \varepsilon}   } \,e^{-4 \varepsilon t} I(X_t|G)  ^{1+\varepsilon}  \\
   &\geq \frac{(\varepsilon / 8)^{\varepsilon} \, \||X|^2\|_1^{1+\varepsilon} }
   {   (5 c)^{{2 \varepsilon}}  
   \left( 8 (1+\varepsilon) \right)^{1+\varepsilon} \,
   (2(\tfrac{1+(2+d)\varepsilon}{d\varepsilon}  ))^{1+2\varepsilon} 
   \left\|  |X|^2 \right\|_{ 2+1/\varepsilon}^{1+2 \varepsilon}  } 
   \,e^{-4 \varepsilon t} I(X_t|G)  ^{1+\varepsilon}  \label{ineqToJustify} \\
      &=
      \left( \frac{d\varepsilon}{2+(2d+4)\varepsilon}\right)^{1+2\varepsilon}
       \frac{   (\varepsilon / 8)^{\varepsilon} \, \||X|^2\|_1^{1+\varepsilon} }
   {   (5 c)^{{2 \varepsilon}}  
   \left( 8 (1+\varepsilon) \right)^{1+\varepsilon} \,
   \left\|  |X|^2 \right\|_{ 2+1/\varepsilon}^{1+2 \varepsilon}  } 
   \,e^{-4 \varepsilon t} I(X_t|G)  ^{1+\varepsilon}   ,\notag
\end{align}
where  \eqref{ineqToJustify} holds since, for $p\geq 1$,
\begin{align}
 \left\|  |X_t|^2 \right\|_{p}  = \left( \EE |X_t|^{2 \cdot p}  \right)^{1/p} &\leq 2 \left(  \EE (e^{-2t}|X|^2 +(1-e^{-2t})  |G|^2 )^{p}  \right)^{1/p}\\
 &= 2 \left\| e^{-2t}|X|^2 +(1-e^{-2t})  |G|^2 \right\|_{p}\\
  &\leq  2\left(   e^{-2t} \left\| |X|^2 \right\| _{p}+  (1-e^{-2t}) \left\|  |G|^2 \right\|_{p}  \right)\\
    &\leq  2 (1+ \tfrac{p}{d}) \left\| |X|^2 \right\| _{p} . \label{chiMoment}
\end{align}
The bound \eqref{chiMoment} uses the fact that $|G|^2$ is a chi-squared random variable with $d$ degrees of freedom,  and hence (using $\EE|X|^2 = d$):
\begin{align}
\left\|  |G|^2 \right\|_{p} = \left( 2^p \frac{\Gamma(p+\tfrac{d}{2})}{\Gamma(\tfrac{d}{2})} \right)^{1/p} &=\EE|X|^2 \left(  \frac{\Gamma(p+\tfrac{d}{2})}{\Gamma(\tfrac{d}{2})   \left( \tfrac{d}{2}\right)^p} \right)^{1/p}\\
 &\leq \EE|X|^2 \left( 1+\tfrac{p}{d} \right) \label{chiSquareBound} \\
 &\leq  \left\| |X|^2 \right\| _{p}(1+ \tfrac{p}{d}) .
\end{align}

Now, the claim will follow by integrating both sides.  Indeed, by the classical de Bruijn  identity, we have
\begin{align}
\int_{0}^{\infty}\left( I(X_t|G) - I(W_t|G)  \right) dt = D(X|G)- D(W|G) = h(\tfrac{1}{\sqrt{2}}(X+X_*))-h(X)  .
\end{align}
By Jensen's inequality, 
\begin{align}
\int_{0}^{\infty}e^{-4\varepsilon t }  I(X_t|G)^{1+\varepsilon}   dt  &\geq 
\frac{1}{(4\varepsilon)^{\varepsilon}}\left( \int_{0}^{\infty}e^{-4\varepsilon t }  I(X_t|G) dt \right) ^{1+\varepsilon}  \\
&\geq   \frac{1}{(4\varepsilon)^{\varepsilon}}  \left(  \int_{0}^{\infty}  I(X_{t+2\varepsilon t }|G) dt  \right) ^{1+\varepsilon} \\
&= \frac{1}{(4\varepsilon)^{\varepsilon}(1+2\varepsilon)^{1+\varepsilon}}D(X|G)^{1+\varepsilon} ,
\end{align}
where we used the bound $I(X_{t+s}|G)\leq e^{-  2 s}I(X_t|G)$ due to exponential decay of information along the semigroup (e.g., \cite{bakry2013analysis}), a change of variables, and the identity $\int_{0}^{\infty}I(X_t|G)   dt = D(X|G)$.
Thus, we have proved
\begin{align}
&h(\tfrac{1}{\sqrt{2}}(X+X_*))-h(X) \notag   \\
&\geq      
      \left( \frac{d\varepsilon}{2+(2d+4)\varepsilon}\right)^{1+2\varepsilon}
       \frac{   (\varepsilon / 8)^{\varepsilon} \, \||X|^2\|_1^{1+\varepsilon} }
   {   (5 c)^{{2 \varepsilon}}  
   \left( 8 (1+\varepsilon) \right)^{1+\varepsilon} \,
   \left\|  |X|^2 \right\|_{ 2+1/\varepsilon}^{1+2 \varepsilon}  } 
      \cdot   \frac{1}{(4\varepsilon)^{\varepsilon}(1+2\varepsilon)^{1+\varepsilon}}D(X|G)^{1+\varepsilon}   \\
&=    \left( \frac{d\varepsilon}{1+( d+2)\varepsilon}\right)^{1+2\varepsilon}
       \frac{  2^4 \, (d/100)^{\varepsilon}  }
   {   
   \left( 2^8 (1+\varepsilon) (1+2\varepsilon) \right)^{1+\varepsilon} 
   } \cdot \frac{   \||X|^2\|_1  }{ c^{2\varepsilon}\, \left\|  |X|^2 \right\|_{ 2+1/\varepsilon}^{1+2 \varepsilon}  }
    D(X|G)^{1+\varepsilon}  .
\end{align}

\end{proof}

\begin{example}[Centered Gaussian Mixtures]
Define $f_i$ to  be the density associated to the centered Gaussian distribution with covariance $\sigma_i^2\, \mathrm{I}$.  Let $X$ be a random vector on $\mathbb{R}^d$ with density $f = \sum_{i=1}^n p_i f_i$.  For convenience, assume $\sigma_1^2 =\min_i \sigma_i^2$ and that the $\sigma_i$'s are normalized so that $\sum_{i=1}^n p_i \sigma_i^2=1$ ($\{p_1, \dots, p_n\}$ is a probability vector).  Then, for $d$ large,
\begin{align}
h(\tfrac{1}{\sqrt{2}}(X+X_*))-h(X)    \gtrsim         \frac{  1  }
   {   
   2^{15}
   } \cdot \frac{    \sum_{i=1}^n p_i (\sigma_i-1)^2     }{  \sigma_1^2 \,  \sum_{i=1}^n p_i \left( \sigma^2_i/ \sigma_1^2\right)^{3}   }\,
    D(X).
\end{align}
\end{example}
It is easy to verify that $f$ is $(1/  \sigma_1^2  )$-regular.  Moreover, $\EE|X|^2 = d$, so using the bound \eqref{chiSquareBound}, we have
\begin{align}
\| |X|^2\|_{2+1/\varepsilon}\leq \left(\sum_{i=1}^n p_i \sigma_i^{2(2+1/\varepsilon)}  \right)^{1/(2+1/\varepsilon)}  (d + 2+1/\varepsilon) 
\end{align}
Now, by Talagrand's inequality, we may lower bound $D(X) \geq  \frac{d}{2} W_2^2 \left( \tfrac{1}{\sqrt{d}} |X| ,  \tfrac{1}{\sqrt{d}} |G|   \right)$ as we did in the discussion following Theorem \ref{thm:FIjumps}. 
Putting everything together and simplifying, we obtain:
\begin{align}
&h(\tfrac{1}{\sqrt{2}}(X+X_*))-h(X)       \\
&\geq    \left( \frac{d\varepsilon}{1+( d+2)\varepsilon}\right)^{2(1+2\varepsilon)}
       \frac{  2^4 \, (1/200)^{\varepsilon}  }
   {   
   \left( 2^8 (1+\varepsilon) (1+2\varepsilon) \right)^{1+\varepsilon} 
   } \cdot \frac{    W_2^{2\varepsilon } \left(  \tfrac{1}{\sqrt{d}} |X| ,  \tfrac{1}{\sqrt{d}} |G|   \right)     }{  \sigma_1^2 \, \left(\sum_{i=1}^n p_i \left( \sigma^2_i/ \sigma_1^2\right)^{2+1/\varepsilon}  \right)^{\varepsilon}   }\,
    D(X).\notag
\end{align}
An easy consequence of the LLN is the limit
\begin{align}
W_2^{2 } \left(  \tfrac{1}{\sqrt{d}} |X| ,  \tfrac{1}{\sqrt{d}} |G|   \right)  \to \sum_{i=1}^n p_i (\sigma_i-1)^2~~~\text{as $d\to\infty$},
\end{align}
so the claim follows by putting $\varepsilon=1$ and crudely bounding. 

 \medskip
 
It is straightforward to remove the explicit requirement in Theorem \ref{thm:Hjump} for $c$-regularity:

\begin{corollary}\label{cor:HjumpNoRegularity}
Let $X, X_*$ be IID radially symmetric random vectors on $\mathbb{R}^d$,  $d\geq 2$, with finite Fisher information.  For any $\varepsilon > 0 $
\begin{align}
h(\tfrac{1}{\sqrt{2}}(X+X_*))-h(X)   
 \geq   \widetilde{C}_{\varepsilon}(X)
    \frac{D(X)^{1+3\varepsilon}}{I(X)^{2\varepsilon} }   ,\label{noC_Hjumps}
\end{align}
where
\begin{align}
 \widetilde{C}_{\varepsilon}(X) =  \left( \frac{d\varepsilon}{1+( d+2)\varepsilon}\right)^{2+4\varepsilon}
       \frac{  2^{12} \, (d/100)^{\varepsilon}  }
   {   
   \left( 2^{17} (1+\varepsilon) (1+2\varepsilon) \right)^{1+\varepsilon} 
   } \cdot \frac{   \||X|^2\|_1  }{    %
   \left\|  |X|^2 \right\|_{ 2+1/\varepsilon}^{1+2 \varepsilon}   }.\label{CtildeDefn}
  \end{align}
\end{corollary}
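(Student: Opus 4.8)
The plan is to remove the $c$-regularity hypothesis in Theorem \ref{thm:Hjump} by a mollification argument: we add a small amount of independent Gaussian noise to $X$, apply Theorem \ref{thm:Hjump} to the mollified vector (which is automatically $c$-regular by Proposition \ref{prop:addNoise}), and then control the error incurred by the mollification using the finiteness of the Fisher information. Concretely, for $\sigma > 0$ let $Z \sim N(0,\sigma^2 \mathrm{I})$ be independent of everything, and set $Y = X + Z$. Since $X$ is radially symmetric, so is $Y$, and by Proposition \ref{prop:addNoise} the vector $Y$ is $(4/\sigma^2)$-regular. Hence Theorem \ref{thm:Hjump} applies to $Y$ with $c = 4/\sigma^2$, giving a lower bound on $h(\tfrac{1}{\sqrt 2}(Y+Y_*)) - h(Y)$ in terms of $D(Y)^{1+\varepsilon}$ and moments of $|Y|^2$.

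The key steps, in order, are as follows. First I would bound the entropy jump for $Y$ above by that for $X$ plus an error: because $h(\tfrac{1}{\sqrt 2}(X + X_* + Z)) - h(\tfrac{1}{\sqrt 2}(X+X_*))$ and $h(X+Z) - h(X)$ are both controlled by the Fisher information of $X$ (quantitatively, the de Bruijn-type estimate $\tfrac{d}{dt} h(X + \sqrt{t}\,G) = \tfrac12 J(X + \sqrt t\, G) \le \tfrac12 J(X)$ along the heat flow, using monotonicity of Fisher information under convolution), one gets $h(\tfrac{1}{\sqrt2}(Y+Y_*)) - h(Y) \le h(\tfrac{1}{\sqrt2}(X+X_*)) - h(X) + O(\sigma^2 J(X))$. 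Second, I would bound $D(Y)$ below in terms of $D(X)$, again with an error that vanishes as $\sigma \to 0$; the natural tool is that relative entropy to the relevant Gaussian changes by at most an amount controlled by $\sigma^2 I(X)$ along the noise addition (since both $D(X)$ and $D(Y)$ are relative entropies to Gaussians of nearby covariance, and the OU/heat semigroup contracts relative entropy at a rate governed by relative Fisher information). Third, I would control the moment factor $\||Y|^2\|_{2+1/\varepsilon}$ in terms of $\||X|^2\|_{2+1/\varepsilon}$ using Minkowski's inequality and the chi-square moment bound \eqref{chiSquareBound}, exactly as in the estimate \eqref{chiMoment} in the proof of Theorem \ref{thm:Hjump}. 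Finally, I would optimize over $\sigma$: the main term scales like $\sigma^{-4\varepsilon} D(X)^{1+\varepsilon}$ (from $c^{-2\varepsilon} = (\sigma^2/4)^{2\varepsilon}$, wait — rather $c^{2\varepsilon} = (4/\sigma^2)^{2\varepsilon}$ sits in the denominator, so the bound carries a factor $\sigma^{4\varepsilon}$) while the error terms are $O(\sigma^2 I(X))$; balancing $\sigma^{4\varepsilon} D(X)^{1+\varepsilon}$ against $\sigma^2 I(X)$ forces $\sigma^2 \sim (D(X)^{1+\varepsilon}/I(X))^{1/(1-2\varepsilon)}$ type scaling, and after substitution one lands on an inequality of the shape $D(X)^{1+3\varepsilon}/I(X)^{2\varepsilon}$ — matching the exponents in \eqref{noC_Hjumps}. (The replacement $\varepsilon \mapsto$ a rescaled parameter may be needed to make the algebra of exponents come out cleanly, which accounts for the change of constants between $C_\varepsilon$ and $\widetilde C_\varepsilon$.)

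The main obstacle I anticipate is making the two error estimates genuinely quantitative and clean — in particular, showing that $h(\tfrac{1}{\sqrt2}(X+X_*)) - h(\tfrac{1}{\sqrt2}(X+X_*+Z))$ and the discrepancy $D(X) - D(Y)$ are bounded by explicit multiples of $\sigma^2 I(X)$ (or $\sigma^2 J(X)$) with constants that do not secretly depend on $X$ in an uncontrolled way. This requires care because adding noise changes the covariance, hence changes the reference Gaussian $G^X$ versus $G^Y$, so the relative entropies are taken with respect to slightly different Gaussians; one must account for the entropy of this covariance shift (a term like $\tfrac{d}{2}\log(1 + \sigma^2/\cdots)$, itself $O(\sigma^2)$ for small $\sigma$, since $\EE|X|^2$ is normalized to $d$) and absorb it. Once the error terms are pinned down as $O(\sigma^2 I(X))$ with absolute constants, the optimization over $\sigma$ is routine calculus, and matching it against the stated form of $\widetilde C_\varepsilon(X)$ is bookkeeping.
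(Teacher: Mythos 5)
Your plan is, at its core, the paper's proof: smooth $X$ by a Gaussian of variance $\sigma^2$, apply Theorem~\ref{thm:Hjump} to the smoothed vector (which is $(4/\sigma^2)$-regular by Proposition~\ref{prop:addNoise}), and then choose $\sigma$ in terms of $D(X)$ and $I(X)$. The paper executes exactly this, but via the Ornstein--Uhlenbeck evolute $X_t = e^{-t}X+(1-e^{-2t})^{1/2}G$ rather than raw noise addition $Y=X+Z$. The two are equivalent up to a rescaling of $X$, but the OU parameterization keeps $\EE|X_t|^2=\EE|X|^2$, so the reference Gaussian never moves and the covariance-shift bookkeeping you flag as the main difficulty simply disappears.

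There are two further simplifications you appear not to have noticed, and they are what make the argument close cleanly. First, the entropy jump is monotone non-increasing along the semigroup: by de Bruijn's identity together with the convolution inequality for Fisher information, $\tfrac{d}{dt}\big(h(W_t)-h(X_t)\big)=J(W_t)-J(X_t)\le 0$, hence $h(W_t)-h(X_t)\le h(W)-h(X)$ exactly, with no $O(\sigma^2 J(X))$ error term to absorb in your first step. Second, the relative-entropy loss is linear in $t$: $D(X_t|G)\ge D(X|G)-t\,I(X|G)$, by de Bruijn and the monotonicity of $t\mapsto I(X_t|G)$. Consequently the ``optimization'' over the noise level is just the choice $t=D(X|G)/(2I(X|G))$, which keeps $D(X_t|G)\ge D(X|G)/2$ while $c=4/(1-e^{-2t})\lesssim 1/t$ turns the factor $c^{-2\varepsilon}$ into $(D/I)^{2\varepsilon}$; the log-Sobolev inequality $\tfrac12 I\ge D$ guarantees $t\le 1/4$ so that the elementary estimate $1-e^{-x}\ge x/\sqrt{2}$ applies. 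By contrast, your closing scaling analysis, which balances $\sigma^{4\varepsilon}D^{1+\varepsilon}$ against $\sigma^2 I$, treats the Fisher-information error as if it were added to the lower bound rather than subtracted from $D$ inside the $(1+\varepsilon)$-th power; done correctly, the only constraint is $\sigma^2 I\lesssim D$, which gives $\sigma^2\sim D/I$ and the stated exponents $D^{1+3\varepsilon}/I^{2\varepsilon}$ directly, without any rescaling of $\varepsilon$.
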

\begin{remark}
Although the requirement of $c$-regularity is eliminated in \eqref{noC_Hjumps}, we see that the statement  of \eqref{noC_Hjumps}  is effectively  the same as  \eqref{EJineq}, with $c$-regularity  being replaced by another measure of regularity of $X$, i.e., the relative Fisher information $I(X)$.  
\end{remark}

\begin{proof}
Observe that inequality \eqref{noC_Hjumps} is invariant to scaling $X$.  Indeed, for $t>0$ we have $t^{4\varepsilon} \widetilde{C}_{\varepsilon}(t X) = \widetilde{C}_{\varepsilon}(X)$ and $t^2 I(tX | G^{tX})  =  I(X | G^{X})$.  Therefore, 
\begin{align}
\widetilde{C}_{\varepsilon}(t X) \frac{1}{I(tX|G^{tX})^{2\varepsilon} } =\widetilde{C}_{\varepsilon}(X) \frac{1}{I(X|G^{X})^{2\varepsilon} }. 
\end{align}
Since $D(tX | G^{tX})=D(X|G^X)$ and  $h(\tfrac{1}{\sqrt{2}}(tX+tX_*))-h(tX) =h(\tfrac{1}{\sqrt{2}}(X+X_*))-h(X)$, the claim follows.  Thus, as before, we will assume without loss of generality that  $X$ is normalized so that $\EE|X|^2 = d$.   

Next, define $W = \tfrac{1}{\sqrt{2}}(X+X_*)$, and let $X_t, W_t$ denote the Ornstein-Uhlenbeck evolutes of $X$ and $W$, respectively. That is, for $t\geq 0$
\begin{align}
&X_t = e^{-t} X + (1-e^{-2t})^{1/2}G, & W_t = e^{-t} W + (1-e^{-2t})^{1/2}G.
\end{align}

By Proposition \ref{prop:addNoise}, $X_t$ is $4\,(1-e^{-2t})^{-1}$-regular for all $t\geq 0$. Noting that $\EE|X_t|^2  = \EE|X|^2$, an application of Theorem \ref{thm:Hjump}  gives 
\begin{align}
&h(W_t) - h(X_t) \notag \\
 &\geq    \left( \frac{d\varepsilon}{1+( d+2)\varepsilon}\right)^{1+2\varepsilon}
       \frac{  2^8 \, (d/100)^{\varepsilon}  }
   {   
  \left( 2^{12} (1+\varepsilon) (1+2\varepsilon) \right)^{1+\varepsilon} 
   } \cdot \frac{(1-e^{-2t})^{2\varepsilon}   \||X_t|^2\|_1  }{  \left\|  |X_t|^2 \right\|_{ 2+1/\varepsilon}^{1+2 \varepsilon}  }
    D(X_t|G)^{1+\varepsilon} \\
    &\geq \left( \frac{d\varepsilon}{1+( d+2)\varepsilon}\right)^{1+2\varepsilon}
       \frac{  2^8 \, (d/100)^{\varepsilon}  }
   {   
   \left( 2^{12} (1+\varepsilon) (1+2\varepsilon) \right)^{1+\varepsilon} 
   } \cdot \frac{(1-e^{-2t})^{2\varepsilon}   \||X|^2\|_1  }{  (2(\tfrac{1+(2+d)\varepsilon}{d\varepsilon}  ))^{1+2\varepsilon} 
   \left\|  |X|^2 \right\|_{ 2+1/\varepsilon}^{1+2 \varepsilon}   }
    D(X_t|G)^{1+\varepsilon}  \label{ineqToJustify2}  \\
&= \left( \frac{d\varepsilon}{1+( d+2)\varepsilon}\right)^{2+4\varepsilon}
       \frac{  2^9 \, (d/100)^{\varepsilon}  }
   {   
   \left( 2^{14} (1+\varepsilon) (1+2\varepsilon) \right)^{1+\varepsilon} 
   } \cdot \frac{(1-e^{-2t})^{2\varepsilon}   \||X|^2\|_1  }{     %
   \left\|  |X|^2 \right\|_{ 2+1/\varepsilon}^{1+2 \varepsilon}   }
    D(X_t|G)^{1+\varepsilon}  ,
\end{align}
where  \eqref{ineqToJustify2} follows by the same logic as \eqref{ineqToJustify}.
Now, by de Bruijn's  identity and the convolution inequality for Fisher information, it follows that 
\begin{align}
\frac{d}{dt}\left( h(W_t) - h(X_t)\right) = J(W_t)-J(X_t) \leq 0.
\end{align}
Thus, $ h(W) - h(X)\geq  h(W_t) - h(X_t)$.  The map $t \mapsto D(X_t |G)$ is continuous on $t\in[0,\infty)$ (e.g., \cite{carlen1991entropy}).  Hence, using the fact that $\frac{d}{dt} D(X_t |G) = -I(X_t|G)$ (de Bruijn's identity) and that $t\mapsto I(X_t|G)$ is monotone-decreasing on $t\in[0,\infty)$ (convolution inequality for Fisher information), we have the inequality $
D(X_t |G) \geq D(X|G) - t I(X|G)$. 

Thus, to finish the proof, put $t = \frac{D(X|G)}{2 I(X|G)}$ and note that $1-e^{-x}\geq\tfrac{1}{\sqrt{2}}x$ for $x\in [0,1/4]$, which applies for our choice of $t$  due to the log Sobolev inequality, i.e.,  $\tfrac{1}{2}I(X|G)\geq D(X|G)$.
\end{proof}

\subsection{Beyond radial symmetry}

Theorem \ref{thm:Hjump} can be immediately extended to distributions that are radially symmetric, modulo an affine transformation.  Indeed, in this case, we can  apply the appropriate linear transformation $X \mapsto A (X - \mu)$, and then invoke the  formulae for the behavior of entropy under such transformations to bound the desired entropy jump.    A similar statement may be made for Theorem \ref{thm:FIjumps}. 

Less immediately, there is potential to  bound the entropy jumps associated to  a general non-symmetric random vector $X$ by considering its  symmetric decreasing rearrangement $X^{\star}$ (see, e.g., \cite{wang2014beyond} for definition).  In this case,  it is known that $h(X) = h(X^{\star})$ and $h(X+X_* )  \geq  h( X^{\star}+X^{\star}_*)$ for $X,X_*$ independent \cite{wang2014beyond}.  Thus, we might expect to be able to bound the entropy jump $h(\tfrac{1}{\sqrt{2}}(X+X_*) ) - h(X)$ from below by $D(X^{\star})$.  There are two issues here that must be  considered in order to proceed along this route.  The first issue is fundamental: we cannot expect to bound $D(X^{\star})$ by $D(X)$ in general (e.g., consider $X$ to be a rearrangement of $G$).  The second issue is technical and arises from the assumed regularity conditions.  In particular, one would need to establish conditions under which the regularity of $X^{\star}$ can be appropriately controlled by the regularity of $X$ (either in the sense of $c$-regularity or relative Fisher information). 

In any case, it may be possible to apply the general proof idea to non-symmetric $X$.  In particular, radial symmetry of $X$ was only critically used in passing from \eqref{eq1symm} to \eqref{eq2symm}.  In general, by the projection property of the score function, we have 
\begin{align}
&\EE\left| 2 \Pi(W) \rho_W(W) - \Pi(X-X_*)\left( \rho(X) - \rho(X_*)\right) \right|^2 \notag\\
&= 
\EE\left| \Pi(X-X_*)  \left( \rho(X) - \rho(X_*)\right)  \right|^2 - 4 \EE\left| \Pi(W) \rho_W(W)  \right|^2, 
\end{align}
so it would be sufficient to bound
\begin{align}
4 \EE\left| \Pi(W) \rho_W(W)  \right|^2 < \alpha \, \EE\left| \Pi(X-X_*)  \left( \rho(X) - \rho(X_*)\right)  \right|^2
\end{align}
for some $\alpha<1$ (possibly depending on the distribution of $X$) in order for the proof to carry over to general $X$.  Note that the distribution of $W = X-X_*$ is symmetric in general, but not radially symmetric, which may be useful toward establishing such a bound.

\section{Estimate of the Deficit in the Log Sobolev Inequality }\label{sec:LSI}

The problem of quantitatively bounding the deficit in the logarithmic Sobolev in equality has received considerable attention lately (e.g., \cite{fathi2014quantitative, bobkov2014bounds, indrei2013quantitative}), but the problem of obtaining a satisfactory estimate that is dimension-free remains open in general.  In this section, we apply our previous results to bound the deficit in the LSI for radially symmetric functions.   

For a random vector $X$ on $\mathbb{R}^d$, define the entropy power of $X$ as
\begin{align}
N(X) = \frac{1}{2\pi e} \exp\Big( {\tfrac{2}{d}h(X)}\Big).
\end{align}
The following entropy power inequality was  proved by the author in \cite{courtade2016strengthening}:
\begin{theorem}
Let $X,X_*,Z$ be independent random vectors on $\mathbb{R}^d$, with $Z$ being Gaussian. Then
\begin{align}
N(X+X_*+Z)N(Z) + N(X)N(X_*) \leq N(X+Z)N(X_*+Z).
\end{align}
\end{theorem}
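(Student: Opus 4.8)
\emph{Proof idea.} The plan is to realize the inequality as monotonicity of a single functional along the heat semigroup. First note that applying a common invertible linear map $T$ to $X$, $X_*$, $Z$ multiplies each entropy power $N(\cdot)$ by $|\det T|^{2/d}$, hence multiplies both sides of the asserted inequality by $|\det T|^{4/d}$; taking $T=\Sigma^{-1/2}$ when $Z\sim N(0,\Sigma)$, we may assume $Z$ is isotropic. Embed the problem in the heat flow: let $Z_s\sim N(0,s\mathrm I)$ be independent of $(X,X_*)$ and set
\[
g(s)=N(X+Z_s)\,N(X_*+Z_s)-s\,N(X+X_*+Z_s),\qquad s\ge 0 .
\]
Since $N(Z_s)=s$, the theorem (with $Z=Z_{s_0}$ for the appropriate $s_0$) is the statement $g(s_0)\ge N(X)N(X_*)=g(0)$, so it suffices to prove that $g$ is nondecreasing on $[0,\infty)$.

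Next I would differentiate. By de Bruijn's identity $\frac{d}{ds}h(Y+Z_s)=\tfrac12 J(Y+Z_s)$, hence $\frac{d}{ds}N(Y+Z_s)=\tfrac1d N(Y+Z_s)J(Y+Z_s)$, and all Fisher informations in sight are finite for $s>0$ since each of $X+Z_s$, $X_*+Z_s$, $X+X_*+Z_s$ is $(4/s)$-regular by Proposition \ref{prop:addNoise}. Writing $U=X+Z_s$, $V=X_*+Z_s$, $W=X+X_*+Z_s$, the condition $g'(s)\ge 0$ is exactly the pointwise-in-$s$ bound
\[
\tfrac1d\,N(U)\,N(V)\,\bigl(J(U)+J(V)\bigr)\;\ge\;N(W)\Bigl(1+\tfrac{s}{d}\,J(W)\Bigr),
\]
so the whole problem reduces to this one inequality among the Fisher informations and entropy powers of $U$, $V$, $W$.

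The leverage for proving it comes from the structural relations binding $U$, $V$, $W$: both $W=U+X_*$ and $W=V+X$ are sums of independent vectors, while $U=X+Z_s$ and $V=X_*+Z_s$ share the Gaussian summand $Z_s$ (with $J(Z_s)=d/s$, $N(Z_s)=s$). I would feed these into Stam's convolution inequality $J(A+B)^{-1}\ge J(A)^{-1}+J(B)^{-1}$, the classical entropy power inequality $N(A+B)\ge N(A)+N(B)$, and the isoperimetric inequality $N(Y)J(Y)\ge d$, and combine them to close the bound. As a sanity check, letting $s\to 0$ collapses the displayed inequality to one of the form $\tfrac1d N(X)N(X_*)\bigl(J(X)+J(X_*)\bigr)\ge N(X+X_*)$, which for $X$ near a Gaussian reduces (to leading order, using that the entropy jump is of higher order) to the Gaussian logarithmic Sobolev inequality $I(X)\ge 2D(X)$.

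The step I expect to be the genuine obstacle is precisely this last inequality and its $s>0$ deformation: it is a true joint inequality in three coupled quantities, and a termwise application of Stam/EPI/isoperimetry is too lossy — applied crudely, those give only $N(U)+N(V)$ on the left and roughly $2N(W)$ on the right, which do not compare since $N(W)\ge\max\{N(U),N(V)\}$. Closing the gap will likely require the sharp two-parameter Blachman--Stam identity $\rho_{A+B}=\EE[\,t\,\rho_A(A)+(1-t)\,\rho_B(B)\mid A+B\,]$ with the optimal $t$, together with the extra structure that $U$ and $V$ carry the same Gaussian component. (Alternatively, one simply invokes the derivation already given in \cite{courtade2016strengthening}.)
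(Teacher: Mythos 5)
Your reduction of the theorem to monotonicity of $g(s)=N(X+Z_s)\,N(X_*+Z_s)-s\,N(X+X_*+Z_s)$ is legitimate, and the differentiation is correct: with $U=X+Z_s$, $V=X_*+Z_s$, $W=X+X_*+Z_s$, the condition $g'(s)\ge 0$ is indeed equivalent to
\begin{align}
\tfrac{1}{d}\,N(U)\,N(V)\bigl(J(U)+J(V)\bigr)\;\ge\; N(W)\Bigl(1+\tfrac{s}{d}\,J(W)\Bigr).
\end{align}
But this is precisely where the argument ceases to be a proof. You isolate this bound, correctly observe that a termwise application of Stam's inequality, the EPI, and isoperimetry is too lossy to yield it, speculate that a weighted Blachman--Stam projection identity exploiting the shared Gaussian component should close the gap, and then fall back on citing \cite{courtade2016strengthening}. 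No argument for the displayed inequality is actually given, and that inequality \emph{is} the theorem, repackaged. The $s\to 0$ sanity check is only a necessary condition, and that limit, $d\,N(X+X_*)\le N(X)N(X_*)(J(X)+J(X_*))$, is itself already a nontrivial strengthening of Stam's inequality that you have not established independently.

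You should also be aware that the paper you are working from does not itself prove this theorem: it states it and cites \cite{courtade2016strengthening}, so your last sentence (``invoke the derivation already given'') amounts to doing exactly what the paper does. If the goal is a self-contained proof, the derivative inequality above is the entire remaining content, and it is not supplied.
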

Particularizing to the case where $Z\sim N(0,\mathrm{I})$, we have the immediate corollary:
\begin{align}
\frac{N(X+\sqrt{t}Z)N(X_*+\sqrt{t}Z) - N(X)N(X_*)}{t}\geq N(X+X_*+ \sqrt{t} Z)  \geq N(X+X_*). 
\end{align}
Letting $t\to 0$ and applying de Bruin's identity gives the inequality:
\begin{align}
d  N(X+X_*) \leq N(X)N(X_*)\left(J(X) + J(X_*) \right).
\end{align}
Supposing $X,X_*$ are identically distributed, we obtain the following improvement of Stam's inequality \cite{stam1959some}, which states that $N(X)J(X) \geq d$:
\begin{align}
\tfrac{1}{d}N(X) J(X) \geq \exp\left\{\frac{2}{d}\left( h(\tfrac{1}{\sqrt{2}}(X+X_*))-h(X)   \right) \right\} . \label{improvedStam}
\end{align}
By work of Carlen \cite{carlen1991superadditivity}, it is well-known that Stam's inequality is equivalent to Gross' log Sobolev inequality for the Gaussian measure \cite{gross1975logarithmic}; i.e., $\tfrac{1}{2}I(X|G) \geq D(X|G)$.  Using the inequality $\log x\leq x-1$, it is straightforward to convert \eqref{improvedStam} into the following improved log Sobolev inequality:
\begin{align}
\delta_{\mathsf{LSI}}(X)  \geq  \left(  h(\tfrac{1}{\sqrt{2}}(X+X_*))-h(X)  \right), \label{improvedLSI}
\end{align}
where  $\delta_{\mathsf{LSI}}(X) := \frac{1}{2}I(X|G) - D(X|G)$ denotes the deficit in the log Sobolev inequality.

In view of Corollary \ref{cor:HjumpNoRegularity}, we have established the following logarithmic Sobolev inequality in quantitative form for radially symmetric $X$:
\begin{theorem}\label{thm:stabilityNoReg}
Let $X$ be a radially symmetric random vector on $\mathbb{R}^d$,  $d\geq 2$, with $D(X)<\infty$. It holds that 
\begin{align}
\delta_{\mathsf{LSI}}(X) \geq  \sup_{\varepsilon>0 }\widetilde{C}_{\varepsilon}(X)
   \frac{ D(X)^{1+3\varepsilon}}{I(X)^{2\varepsilon}},
\end{align}
where $\widetilde{C}_{\varepsilon}(X)$ is defined as in \eqref{CtildeDefn}.
\end{theorem}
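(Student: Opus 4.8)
The plan is to read Theorem \ref{thm:stabilityNoReg} off directly from the improved logarithmic Sobolev inequality \eqref{improvedLSI} together with the regularity-free entropy jump bound of Corollary \ref{cor:HjumpNoRegularity}. Inequality \eqref{improvedLSI} asserts $\delta_{\mathsf{LSI}}(X) \geq h(\tfrac{1}{\sqrt{2}}(X+X_*)) - h(X)$, and Corollary \ref{cor:HjumpNoRegularity} --- valid for any radially symmetric $X$ with finite Fisher information --- bounds the right-hand side below by $\widetilde{C}_\varepsilon(X)\, D(X)^{1+3\varepsilon}/I(X)^{2\varepsilon}$ for every $\varepsilon > 0$. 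Concatenating these and taking the supremum over $\varepsilon > 0$ yields the stated estimate. The only thing to check is that the hypotheses of Corollary \ref{cor:HjumpNoRegularity} are in force under the weaker standing assumption $D(X) < \infty$, together with a routine treatment of the degenerate values of $D(X)$ and $I(X)$ (where the ratio $D(X)^{1+3\varepsilon}/I(X)^{2\varepsilon}$ could a priori be ill-defined).

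First I would dispose of the degenerate cases. If $D(X) = 0$ then $X$ is Gaussian (consistently with radial symmetry), so the entropy jump and $\delta_{\mathsf{LSI}}(X)$ both vanish; since $D(X) = 0$ also forces $I(X) = 0$, one interprets each term on the right as $0$ and the asserted inequality holds with equality. If $I(X) = +\infty$ but $D(X) < \infty$, then, because $\EE|X|^2 < \infty$ is implicit in $D(X) < \infty$, we have $J(X) = I(X) + J(G^X) = +\infty$ and hence $\delta_{\mathsf{LSI}}(X) = +\infty$; meanwhile each $\widetilde{C}_\varepsilon(X)\, D(X)^{1+3\varepsilon}/I(X)^{2\varepsilon}$ equals $0$, since $\widetilde{C}_\varepsilon(X) \in [0,\infty)$ (it involves only $\||X|^2\|_1$ and $\||X|^2\|_{2+1/\varepsilon}$, the latter in the denominator, so the constant is simply $0$ if that moment is infinite), $D(X)^{1+3\varepsilon} < \infty$, and $2\varepsilon > 0$. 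So the inequality is trivial in both of these cases, and we may henceforth assume $0 < D(X) < \infty$ and $0 < I(X) < \infty$. Then $J(X) = I(X) + J(G^X) < \infty$, so Corollary \ref{cor:HjumpNoRegularity} applies verbatim and the ratio $D(X)^{1+3\varepsilon}/I(X)^{2\varepsilon}$ is finite and well-defined for every $\varepsilon > 0$.

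With these reductions, I would simply chain \eqref{improvedLSI} and \eqref{noC_Hjumps}: for each fixed $\varepsilon > 0$,
\[
\delta_{\mathsf{LSI}}(X) \;\geq\; h(\tfrac{1}{\sqrt{2}}(X+X_*)) - h(X) \;\geq\; \widetilde{C}_\varepsilon(X)\,\frac{D(X)^{1+3\varepsilon}}{I(X)^{2\varepsilon}},
\]
and since the left-hand side does not depend on $\varepsilon$, taking the supremum over $\varepsilon > 0$ on the right completes the proof. I do not anticipate any genuine obstacle: the analytic work has already been carried out in establishing \eqref{improvedLSI} --- a consequence of the strengthened entropy power inequality of \cite{courtade2016strengthening} and the elementary bound $\log x \leq x-1$ --- and in Corollary \ref{cor:HjumpNoRegularity}; Theorem \ref{thm:stabilityNoReg} is a direct corollary, the only subtlety being the bookkeeping at the extremal values of $D(X)$ and $I(X)$ noted above.
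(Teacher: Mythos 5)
Your proof takes exactly the route the paper intends: chain the improved log Sobolev inequality \eqref{improvedLSI} with the regularity-free entropy jump bound \eqref{noC_Hjumps} and take the supremum over $\varepsilon$; the paper states the theorem with essentially no proof beyond the phrase ``in view of Corollary \ref{cor:HjumpNoRegularity}.'' Your bookkeeping of degenerate cases is a welcome addition, though note one small slip: when $D(X)=0$, $X$ is Gaussian but not necessarily standard, so $\delta_{\mathsf{LSI}}(X)$ need not vanish (e.g.\ $X\sim N(0,\sigma^2\mathrm{I})$ with $\sigma\neq 1$); nevertheless $\delta_{\mathsf{LSI}}(X)\geq 0$ by the LSI and the right-hand side is $0$, so the inequality still holds, just not with equality.
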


Theorem \ref{thm:stabilityNoReg} yields a quantitative stability result for the LSI whenever $\EE|X|^{p}<\infty$ for some $p>4$.  As an illustrative example, suppose $X$ is normalized such that $\EE|X|^2=d$ (so that $D(X)=D(X|G)$ and $I(X)=I(X|G)$) and $\EE|X|^8<\infty$.  Then,  a clean bound is  obtained by setting $\varepsilon=1/2$, rearranging, solving a quadratic inequality and bounding constant terms:
\begin{align}
\frac{1}{2} I(X) %
&\geq  D(X) \frac{1}{2}\left(1+
\sqrt{1+ \frac{1}{10^8}  \frac{  \,\, \||X|^2\|^{5/2}_1  }{    %
   \left\|  |X|^2 \right\|^2_{ 4}   }   \sqrt{ D(X)} }\right) .
\end{align}
Note that this is a very strong stability result. 
Indeed, it is a straightforward exercise (identify $\||X|^2\|_1 = d$, apply Minkowski's inequality to $\left\|  |X|^2 \right\|_{ 4}$ and simplify) to show that  
\begin{align}
D(X) \leq 10^8 \max\left\{ \delta_{\mathsf{LSI}}(X), \delta^{1/2}_{\mathsf{LSI}}(X)  \sqrt{\frac{\EE|X_1|^8}{d}}\right\}.
\end{align}
We emphasize that $\EE|X_1|^8$ is the eighth moment of the \emph{one-dimensional} random variable $X_1$, the first coordinate of $X$.  Hence,  we would generally expect that $\EE|X_1|^8 \ll d$   in high dimension.

If $X$ is known to be $c$-regular for some $c$, then we can establish the following  stability estimate, which has no explicit dependence on dimension:
\begin{theorem}\label{thmLSIdeficit}
Let $X$ be a radially symmetric random vector  on $\mathbb{R}^d$,  $d\geq 2$, with $c$-regular density $f$.  For any $\varepsilon > 0$
\begin{align}
\delta_{\mathsf{LSI}}(X)  &\geq   \frac{1}{4} K_{\varepsilon}(X)    I(X)^{1+\varepsilon}, 
\end{align}
where $K_{\varepsilon}(X)$ is as defined in \eqref{KpX_FI}.
\end{theorem}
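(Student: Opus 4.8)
The plan is to deduce the theorem from Theorem~\ref{thm:FIjumps} via the soft inequality
\begin{align}
\delta_{\mathsf{LSI}}(X) \;\geq\; \tfrac14\Big(J(X) - J(\tfrac{1}{\sqrt{2}}(X+X_*))\Big), \label{plan:soft}
\end{align}
which in fact holds for \emph{any} pair of IID radially symmetric random vectors with finite Fisher information (no regularity needed). Write $W := \tfrac{1}{\sqrt{2}}(X+X_*)$. Since a radially symmetric vector is centered, $X$ and $W$ have the same covariance, hence the same associated Gaussian $G^X = G^W$; consequently the variance-dependent corrections relating $I(\cdot|G)$, $D(\cdot|G)$ (with $G\sim N(0,\mathrm{I})$) to $J$ and $h$ cancel in differences, giving
\begin{align}
I(X|G) - I(W|G) = J(X) - J(W), \qquad D(X|G) - D(W|G) = h(W) - h(X). \label{plan:ids}
\end{align}

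The one idea needed is to apply the logarithmic Sobolev inequality to $W$ as well as to $X$. On one hand, Gross's inequality applied to $W$ gives $\delta_{\mathsf{LSI}}(W) = \tfrac12 I(W|G) - D(W|G) \geq 0$; subtracting this from $\delta_{\mathsf{LSI}}(X)$ and substituting \eqref{plan:ids} yields
\begin{align}
\delta_{\mathsf{LSI}}(X) \;\geq\; \delta_{\mathsf{LSI}}(X) - \delta_{\mathsf{LSI}}(W) \;=\; \tfrac12\big(J(X) - J(W)\big) - \big(h(W) - h(X)\big). \label{plan:A}
\end{align}
On the other hand, the improved logarithmic Sobolev inequality \eqref{improvedLSI} states $\delta_{\mathsf{LSI}}(X) \geq h(W) - h(X)$. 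Averaging this with \eqref{plan:A}, the entropy-jump terms cancel exactly and \eqref{plan:soft} falls out.

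To finish, since $X$ is radially symmetric and $c$-regular, Theorem~\ref{thm:FIjumps} applies directly and gives $J(X) - J(\tfrac{1}{\sqrt{2}}(X+X_*)) \geq K_\varepsilon(X)\, I(X)^{1+\varepsilon}$; combining with \eqref{plan:soft} produces exactly the asserted bound. I do not expect any genuinely hard step here: all the content sits in the bootstrap \eqref{plan:A}, i.e.\ in using the deficit inequality also at $W$, which is elementary once noticed. The only point that warrants a line of care is that $\delta_{\mathsf{LSI}}$ is measured against the \emph{fixed} standard Gaussian and hence is not scale invariant; this is harmless precisely because $X$ and $W$ share a covariance, so the identities \eqref{plan:ids} carry no stray variance terms and everything can be run on $X$ as given.
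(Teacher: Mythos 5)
Your argument is correct and is essentially the paper's own proof, reorganized: both rest on the improved LSI \eqref{improvedLSI} together with $\delta_{\mathsf{LSI}}(\tfrac{1}{\sqrt{2}}(X+X_*))\geq 0$ to establish $\delta_{\mathsf{LSI}}(X)\geq\tfrac14\big(J(X)-J(\tfrac{1}{\sqrt{2}}(X+X_*))\big)$, and then invoke Theorem~\ref{thm:FIjumps}. Your two-inequalities-and-average presentation is algebraically equivalent to the paper's chain substitution, where \eqref{improvedLSI} is fed directly into the identity $D(X|G)-D(W|G) = -\delta_{\mathsf{LSI}}(X)+\tfrac12\big(I(X|G)-I(W|G)\big)+\delta_{\mathsf{LSI}}(W)$.
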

\begin{proof}
All quantities in \eqref{improvedLSI} are invariant to translations of $X$, so we assume without loss of generality that $\EE X=0$ for the moment.  In this case, we may write 
\begin{align}
\delta_{\mathsf{LSI}}(X)  &\geq  D(X|G) - D( \tfrac{1}{\sqrt{2}}(X+X_*) |G) \\
&= -\delta_{\mathsf{LSI}}(X)  + \frac{1}{2}\left( I(X|G) - I( \tfrac{1}{\sqrt{2}}(X+X_*) |G)  \right)  + \delta_{\mathsf{LSI}}( \tfrac{1}{\sqrt{2}}(X+X_*))\\
&\geq   -\delta_{\mathsf{LSI}}(X)  + \frac{1}{2}\left( J(X) - J( \tfrac{1}{\sqrt{2}}(X+X_*) )  \right)  .
\end{align}
Thus, in view of Theorem \ref{thm:FIjumps}, we have proved the claim.
\end{proof}

\begin{remark}
Although Theorems \ref{thm:stabilityNoReg} and  \ref{thmLSIdeficit} consider radially symmetric densities, we see from \eqref{improvedLSI} and the following inequalities that $\delta_{\mathsf{LSI}}(X)$ dominates entropy production and dissipation of Fisher information  in general.  Stated another way, any quantitative lower bound on entropy production (or Fisher information dissipation) will provide a lower bound on $\delta_{\mathsf{LSI}}(X)$.  
\end{remark}

In closing, we mention here that a quantitative form of the sharp Sobolev inequality was established by Cianchi, Fusco,   Maggi and   Pratelli \cite{cianchi2009sharp} through a reduction of the general inequality to the setting of radially symmetric functions by considering spherically symmetric rearrangements. On this note, from \cite[Corollary 8.7]{wang2014beyond}, it is easy to verify that $\delta_{\mathsf{LSI}}(X)  \geq \delta_{\mathsf{LSI}}(X^{\star})$, where as before, $X^{\star}$ is the symmetric decreasing rearrangement of $X$.  Provided  $X^{\star}$ has regular density, we obtain 
\begin{align}
\delta_{\mathsf{LSI}}(X)  &\geq   \frac{1}{4} K_{\varepsilon}(X^{\star})    I(X^{\star} )^{1+\varepsilon}.
\end{align}

This should be compared to a result by Bobkov, Gozlan,   Roberto  and Samson \cite{bobkov2014bounds} (see also \cite{indrei2013quantitative}):   If $X$ is a random vector on $\mathbb{R}^d$ with smooth density $p = e^{-V}$ satisfying $V''\geq \varepsilon \mathrm{I}$ for some $\varepsilon>0$, then 
 \begin{align}
\delta_{\mathsf{LSI}}(X)  &\geq   c_{\varepsilon}    W_2^2(\bar{X},G), 
\end{align}
where $c_{\varepsilon}$ is a constant depending on $\varepsilon$, only, and  $\bar{X}$ corresponds to a rearrangement of the vector $X$ such that its one-dimensional marginals $\bar{X}_1, \bar{X}_2, \dots, \bar{X}_d$ form a martingale  \cite{bobkov2014bounds}. We remark that the log-concavity assumption can be more restrictive than our regularity assumptions, but the symmetry of $\bar{X}$ is less restrictive than radial symmetry.  We also mention that Fathi, Indrei and   Ledoux \cite{fathi2014quantitative} have established another quantitative estimate on $\delta_{\mathsf{LSI}}(X)$, under the assumption that $X$ satisfies a Poincar\'e inequality.   Namely, if $X$ is a centered random vector that has spectral gap $\varepsilon$, then 
 \begin{align}
\delta_{\mathsf{LSI}}(X)  &\geq   c_{\varepsilon}    I(X), 
\end{align}
where   $c_{\varepsilon}$ is a constant depending on $\varepsilon$, only.  It is interesting to recall from our above discussion that stability of entropy jumps (in dimension one) has previously been  established under a spectral gap condition \cite{ball2003entropy}.  We have now seen  in \eqref{improvedLSI} that the two stability problems are closely connected.

\section*{Appendix: Proof of Propositions \ref{prop:StartRegular} and \ref{prop:approxR}}

\begin{proof}[Proof of Proposition \ref{prop:StartRegular}]
First, suppose $X$ is $c$-regular, and let $V = X+Z$, $Z\sim N(0,\sigma^2 I)$.    We claim that $V$ is $(5c)$-regular. Toward this end, let $f_X$, $f_Z$ and $f_V$ denote the densities of $X,Z,V$, respectively. Now, 
\begin{align}
\left| \nabla f_V(v) \right| &=\left|   \int (\nabla f_X(v-z)) f_Z(z) dz \right|\\
&\leq   \int  \left| \nabla f_X(v-z)\right| f_Z(z) dz \\
&\leq c \int   f_X(v-z) \left(  \left|v-z\right| +\EE|X|  \right) f_Z(z) dz \\
&\leq c\, f_V(v) \left(|v| + \EE|X|  \right)   + c\, \int \left| z\right|  f_X(v-z)    f_Z(z) dz \\
&= c\, f_V(v) \left(|v| + \EE|X|  \right)   + c\, f_V(v)\, \EE\left[ \,|X-v| \, |V=v \right] \\
&\leq c\, f_V(v) \left(|v| + \EE|X|  \right)   + c\, f_V(v)\, \left( 3 |v| + 4 \EE|X| \right) \\
&\leq 5 c\, f_V(v) \left(|v| + \EE|V|  \right).
\end{align}
A proof of the only nontrivial  inequality $\EE\left[ \,|X-v| \, |V=v \right]  \leq 3 |v| + 4 \EE|X|$ can be found in \cite[Proposition 2]{polyWuWasserstein2016}. Since $V$ is smooth with nonvanishing density, $\nabla \log f_V  = \frac{\nabla f_V }{f_V}$; this completes the proof of the claim.

By a change of variables, we observe that  $e^{-t} X$ is $(c\,e^{2t})$-regular.  Combining with the previous claim finishes the proof. 
\end{proof}

\begin{remark}
The claim of Proposition \ref{prop:StartRegular} may be strengthened.  Indeed, if $X$ is $c$-regular, then $X_t$ is $(5c + 4 )$-regular for all $t\geq 0$.  To see this, note that Proposition \ref{prop:addNoise} establishes that $X_t$ is $(4 /
(1-e^{-2t}))$-regular.  However, Proposition \ref{prop:StartRegular} shows that $X_t$ is $(5c\,e^{2t})$-regular; maximizing the minimum of these two quantities over $t\geq 0$ establishes the strengthened claim.   The weaker claim  is more convenient for our purposes, so that regularity of $X_t$ is a multiple of $c$.  This ensures that the inequalities are invariant to scaling of $X$.
\end{remark}

\begin{proof}[Proof of Proposition \ref{prop:approxR}]
The proof is elementary, but provided here for completeness.  Define $X_0 = \sqrt{d}R_0 U$, where as before $U$ is uniform on $\mathbb{S}^{d-1}$ and independent of $R_0$.  Now, let 
\begin{align}
X = \sqrt{1-\varepsilon}X_{0} + \sqrt{\varepsilon} G,
\end{align}
and set $R = \frac{1}{\sqrt{d}} |X|$.  Clearly, $\EE R^2 = 1$ and, by Proposition \ref{prop:addNoise}, $X$ is $(4/\varepsilon)$-regular.  Now, observe that 
\begin{align}
F_{R}(r) &= \Pr\left\{\tfrac{1}{\sqrt{d}}  | \sqrt{1-\varepsilon}X_{0} + \sqrt{\varepsilon} G |  \leq  r   \right\}\\
&\geq \Pr\left\{  \tfrac{\sqrt{1-\varepsilon}}{\sqrt{d}}|X_{0}| + \tfrac{\sqrt{\varepsilon}}{\sqrt{d}}|G|   \leq  r   \right\}\\
&\geq \Pr\left\{ \tfrac{\sqrt{1-\varepsilon}}{\sqrt{d}} |X_{0}|   \leq  r -\sqrt{(t+1)\varepsilon}  \right\}\Pr\left\{   \tfrac{\sqrt{\varepsilon}}{\sqrt{d}}|G|   \leq  \sqrt{(t+1)\varepsilon}  \right\}\\
&= F_{R_0}\left(  \tfrac{ r -\sqrt{(t+1)\varepsilon}}{\sqrt{1-\varepsilon}}  \right) \left(1- \Pr\left\{   \tfrac{1}{{d}} |G|^2 -1  > t\right\}\right)\\
&\geq F_{R_0}\left(  \tfrac{ r - \sqrt{(t+1)\varepsilon}}{\sqrt{1-\varepsilon}}  \right)   - e^{-d t^2 /8} ,
\end{align}
where the final inequality is due to the tail bound for a chi-squared random variable with $d$ degrees of freedom.   The other direction is similar: 
\begin{align}
F_{R}(r) &= \Pr\left\{\tfrac{1}{\sqrt{d}}  | \sqrt{1-\varepsilon}X_{0} + \sqrt{\varepsilon} G |  \leq  r   \right\}\\
&\leq  \Pr\left\{  \tfrac{\sqrt{1-\varepsilon}}{\sqrt{d}}|X_{0}| - \tfrac{\sqrt{\varepsilon}}{\sqrt{d}}|G|   \leq  r   \right\}\\
&\leq \Pr\left\{ \tfrac{\sqrt{1-\varepsilon}}{\sqrt{d}} |X_{0}|   \leq  r +\sqrt{(t+1)\varepsilon}  \right\} + \Pr\left\{   \tfrac{\sqrt{\varepsilon}}{\sqrt{d}}|G|   >  \sqrt{(t+1)\varepsilon}  \right\}\\
&= F_{R_0}\left(  \tfrac{ r + \sqrt{(t+1)\varepsilon}}{\sqrt{1-\varepsilon}}  \right)  +  e^{-d t^2 /8}.
\end{align}
 \end{proof}
 
 
 \subsection*{Acknowledgment}
This work was supported in part by NSF grants CCF-1528132 and CCF-0939370 (Center for Science of Information).

\bibliographystyle{unsrt}
\bibliography{jumpsBib}

\newpage

\end{document}